\documentclass[preprint, superscriptaddress,
 amsmath,amssymb, aps, prl,usenames,dvipsnames]{revtex4-1}
 
 \usepackage[dvipsnames]{xcolor}
 \usepackage{xr}
 \usepackage[titletoc,toc,title]{appendix}
\usepackage{hyperref,amsthm}
\usepackage{graphicx,epstopdf,bm,amssymb,amsfonts,dcolumn,bm,xcolor,xr}
\hypersetup{colorlinks=true,linkcolor=blue,citecolor=red}


\newcommand{\R}{\mathbb{R}}

\newcommand{\n}{\mathbf{n}}

\renewcommand{\P}{{\mathbb P}}
\newcommand{\E}{{\mathbb E}}

\newtheorem{theorem}{Theorem}
\newtheorem{proposition}[theorem]{Proposition}
\newtheorem{lemma}[theorem]{Lemma}

\def\eps{\varepsilon}
\def\E{\mathbb{E}}
\def\P{\mathbb{P}}
\def\R{\mathbb{R}}

\def\<{\langle}
\def\>{\rangle}
\def\hot{\textup{h.o.t.}}

\def\n{\textbf{n}}
\def\L{\mathcal{L}}

\def\S{\mathbf{S}}

\def\dist{\textup{d}}

\usepackage{mathtools}
\newcommand{\dd}{\textup{d}}
\def\eps{\varepsilon}
%


\begin{document}	
\title{Fast decisions reflect biases, slow decisions do not}

\author{Samantha Linn}
\email[]{samantha.linn@utah.edu}
\affiliation{Department of Mathematics, University of Utah, Salt Lake City, Utah, USA}
\author{Sean D. Lawley}
\email[]{lawley@math.utah.edu}
\affiliation{Department of Mathematics, University of Utah, Salt Lake City, Utah, USA}
\author{Bhargav R. Karamched}
\email[]{bkaramched@fsu.edu}
\affiliation{Department of Mathematics, Florida State University, Tallahassee, Florida 32306, USA}
\affiliation{Institute of Molecular Biophysics, Florida State University, Tallahassee, Florida 32306, USA}
\affiliation{Program in Neuroscience, Florida State University, Tallahassee, Florida 32306, USA}
\author{Zachary P. Kilpatrick}
\email[]{zpkilpat@colorado.edu}
\affiliation{Department of Applied Mathematics, University of Colorado Boulder, Boulder, Colorado 80309, USA}
\author{Kre\v{s}imir Josi\'{c}}
\email[]{kresimir.josic@gmail.com}
\affiliation{Department of Mathematics, University of Houston, Houston, Texas 77004, USA}
\affiliation{Department of Biology and Biochemistry, University of Houston, Houston, Texas 77004, USA}

%

\date{\today}


\begin{abstract}
Decisions are often made by heterogeneous groups of individuals, each with distinct initial biases and access to information of different quality. We show that in large groups of independent agents who accumulate evidence the first to decide are those with the strongest initial biases. Their decisions align with their initial bias, regardless of the underlying truth. In contrast, agents who decide last make decisions as if they were initially unbiased, and hence make better choices. We obtain asymptotic expressions in the large population limit that quantify how agents' initial inclinations shape early decisions. Our analysis shows how bias, information quality, and decision order interact in non-trivial ways to determine the reliability of decisions in a group.
\end{abstract}

\pacs{Valid PACS appear here}

\maketitle

Evidence accumulation models are used widely to describe how different organisms
integrate information to make choices~\cite{Bogacz2006}.  
Experimental evidence shows that these models capture the dynamics of the decision making process of humans and other animals, including the tradeoff between speed and accuracy~\cite{Ratcliff1978theory,chittka2003bees,newsome1989neuronal,uchida2003speed,swets1961decision}. Such models can also be used to understand how decisions are made in social groups, both when individuals observe each other's choices~\cite{reina2023asynchrony,karamched2020heterogeneity,mann2018collective,tump2022avoiding} and when they act independently~\cite{stickler2023impact}.

The accumulation of evidence is often modeled using biased Brownian motion with the quality of evidence determining the magnitude of drift and diffusion.  An agent is assumed to commit to a decision when the process crosses a threshold.  Most previous evidence accumulation models describe a single agent.  However, questions remain about how the order of choices in a group is related to their accuracy~\cite{zandbelt2014response}.  In a group of initially unbiased individuals accumulating evidence of different quality, the fastest and most accurate decisions are made by those accessing the highest quality information~\cite{reina2023asynchrony}.  Here we ask how the initial biases of individuals in a group impact the order and accuracy of their choices. When is a decision driven mainly by an agent's initial bias as opposed to accumulated evidence?

We show that in large groups of agents starting with different initial biases, early decisions tend to be made by agents with the most extreme predispositions.  The choices of these agents agree with their initial bias, regardless of the quality of the evidence they have access to.   On the other hand, decisions of late deciders do not depend on their initial bias.  Thus, in large groups early decisions reflect only initial inclinations, regardless of which choice is right. Late decisions reflect only accumulated evidence and are more likely to be correct. These effects hold generically,  but not in the special case of initially unbiased agents~\cite{reina2023asynchrony}.

\paragraph{Model description.} We first assume that each individual in a population of $N$ agents has to decide between two choices (hypotheses), $H^+$ and $H^-$.  They do so by accumulating evidence and computing the conditional probabilities, $P(H^{\pm}| \text{evidence})$, that 
one of the two hypotheses is correct. When observations are independent and each provides weak evidence, the log likelihood ratio, or \emph{belief}, of agent $i$ in the group,  $X_i = \log\left(P(H^+| \text{evidence}_i) /
P(H^-| \text{evidence}_i)\right)$, evolves approximately as a biased Brownian motion~\cite{Bogacz2006,Gold02} (See Fig.~\ref{fig1}A),
\begin{equation} \label{E:accumulation}
\dd X_i
=\mu_i\,\dd t+\sqrt{2D_i}\,\dd W_i,
\end{equation}
where the drift, $\mu_i,$ and diffusion coefficient, 
$D_i,$ capture the strength and noisiness of the evidence, respectively~\footnote{For an ideal observer if only measurement noise is present then drift and diffusion are equal in the continuum limit, $\mu_i=D_i=m_i$~\cite{Bogacz2006,veliz16}. We consider the more general case where internal noise can increase $D_i>\mu_i$.}.   For all agents the correct choice ($H \in H^{\pm}$) is given by the sign of the drift (${\rm sign}[\mu_i] = \pm 1$). Eq.~\eqref{E:accumulation} is widely used and accurately captures the dynamics of decisions in humans and  animals, including variability in response time and the impact of evidence quality and
biases on choice~\cite{Ratcliff1978theory,gold2007neural,mulder2012}.

Agents start with an initial bias, $X_i(0)$, reflecting information or assumptions they have about the prior probability of either hypothesis~\cite{mulder2012}. We denote by $y$ the initial data for a generic agent.
Each agent then accumulates evidence, and its beliefs evolve according to Eq.~(\ref{E:accumulation}). Agent $i$ makes a decision when its belief reaches one of two thresholds,
$- \theta < 0 < \theta,$ at \emph{decision  time}  $\tau_i
:=\inf\{t>0:X_i(t)\notin(-\theta,\theta)\}$. This decision, denoted by $d_i = H^{\pm}$, is determined by the sign of the threshold reached, ${\rm sign}[X_i(\tau_i)]$. If decision criteria differ between agents an appropriate rescaling of $X_i(0)$, $\mu_i$, and $D_i$ allows us to assume that all agents use the same thresholds~\cite{Bogacz2006}. 

\paragraph{Agents with the most extreme initial biases decide first.} We show that in large groups agents whose initial biases are closest to one of the thresholds make the earliest decisions. We first assume observers are identical  
except for their initial biases, so that $\mu_i = \mu$ and $D_i = D$ in Eq.~\eqref{E:accumulation}. 
We denote by $T_i$ the $i^{\rm th}$ decision time so that 
$T_1\le T_2\le \cdots\le T_N$, where $T_i = \tau_{n(i)}$  and $n(i)$ is the index of
the $i^{\rm th}$ agent to decide. Hence, the index of the first decider is $n(1)$.


For simplicity, we assume that each agent starts with one of finitely many initial beliefs, $\{x_0,x_1, \ldots, x_{I-1}\}$, sampled with probability
$q_i=P(X_j(0) = x_i)$ for $i=0,...,I-1$. The distance of the initial belief $x_{i}$ to the closest threshold is
$
L_{i}=\min\{\theta-x_{i},x_{i}+\theta\}.
$
Let $i=0$ be the index of the unique most extreme initial belief held by an agent, so $L_{0}<L_i$ for $i\neq 0$.

\begin{figure}
\includegraphics[width = \columnwidth]{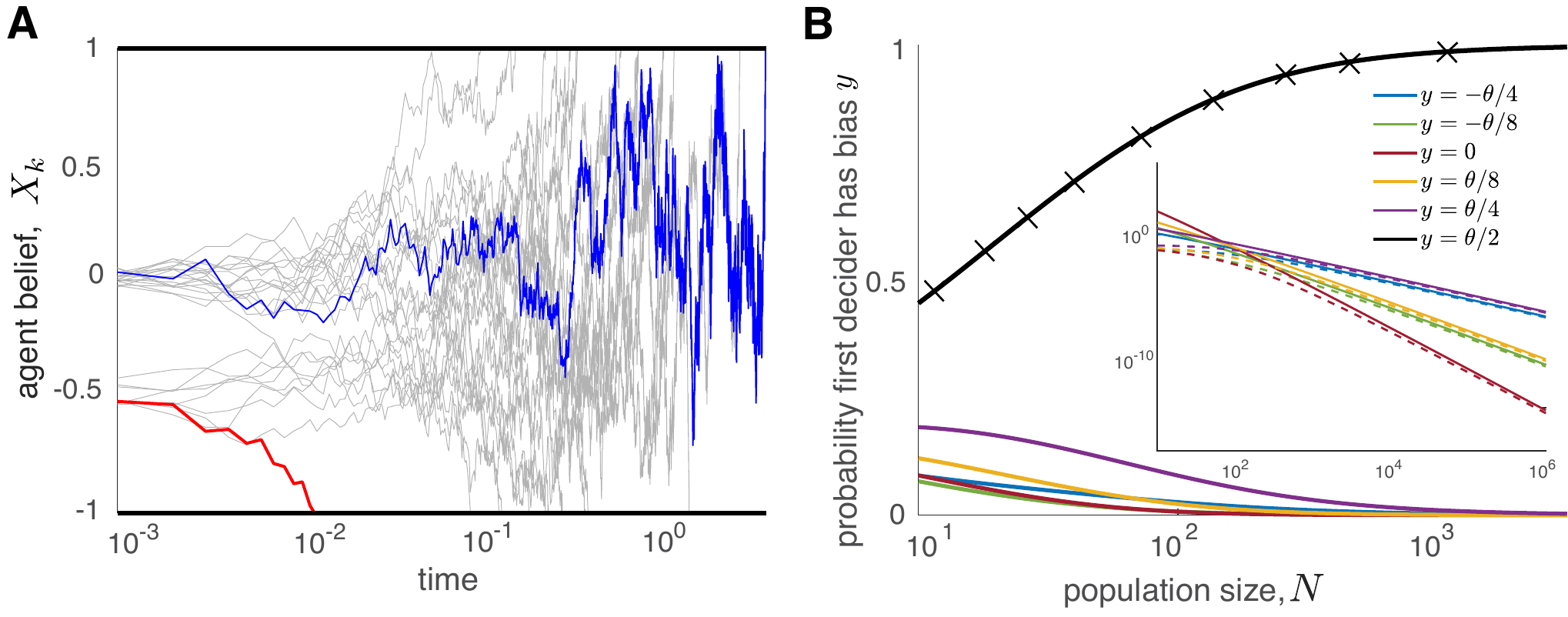}
\caption{Initial bias determines the choice of early deciders. (A)~Evolution of beliefs of $N = 10^4$ agents who each have even odds of initially being unbiased  or biased  ($P(X_j(0) = x_i) =0.5$, $x_i=0,-0.5$). The first agent (red) decides according to their initial bias, and makes the wrong decision at $T_1 \approx 0.01$. The last agent (blue) decides correctly at $T_{10000} \approx 10$. (B)~Probability that the agent with the largest initial bias decides first as a function of population size, $N$. Solid curves were determined by numerical quadrature
(Eq.~(\textcolor{blue}{S4}))
with initial biases assigned with uniform probability from values listed in the legend; black crosses denote results of a stochastic simulation averaged over $10^6$ trials. Inset: Log-log plot of the same results with dashed curves showing the asymptotic results in Eq.~\eqref{E:decay}.  Throughout, agents use identical thresholds $\pm \theta = \pm 1$, drift $\mu = 1$, and diffusivity $D = 1$.}
\label{fig1}
\end{figure}

For a fixed number of initial beliefs, $I$, the first agent to decide in a large group is the one with the largest initial bias (Fig.~\ref{fig1}A), in the sense that
\begin{align}\label{want}
P(X_{n(1)}(0)=x_{0})\to1\quad\text{as }N\to\infty.
\end{align}
More precisely, in the Supplemental Material (SM) we show that
\begin{align} \label{E:decay}
P(X_{n(1)}(0)=x_{i})\sim \eta_{i} (\ln N)^{(\beta_{i}-1)/2}N^{1-\beta_{i}}
\end{align}
as $N\to\infty$ for each $i\neq 0$, where 
\begin{align*}
\beta_{i}
&=(L_{i}/L_{0})^{2}>1, \\
\eta_{i}
&=\frac{q_{i}}{q_{0}^{\beta_{i}}}\exp\Big(\frac{\sqrt{\beta_{i}}}{2D}\big(\mu_i L_{0}-\mu_{0}L_i\big)\Big)\sqrt{\beta_{i}\pi^{\beta_{i}-1}}\Gamma(\beta_{i})
>0,
\end{align*}
and $\mu_i= \pm \mu$ if $x_i \gtrless 0$.  The same statement holds if $n(1)$ is replaced by $n(j)$ in Eq.~\eqref{E:decay}, but with a change in the prefactor, $\eta_i$ (See SM).
Thus, the probability that the
first decision is {\em not} made by the agent with the most extreme initial belief decreases as a negative power of the population size $N$ (Fig.~\ref{fig1}B). 
The approximation given by Eq.~\eqref{E:decay} is in excellent agreement with 
the true probabilities when $N \gtrapprox 10^3$ (See Fig~\ref{fig1}B inset).
Moreover, the probability that the agents with the most extreme initial beliefs make the 
first decision is close to unity already for $N \approx 100$ when initial beliefs are well separated and drift is not too strong.

\begin{figure}[t!]
\includegraphics[width = \columnwidth]{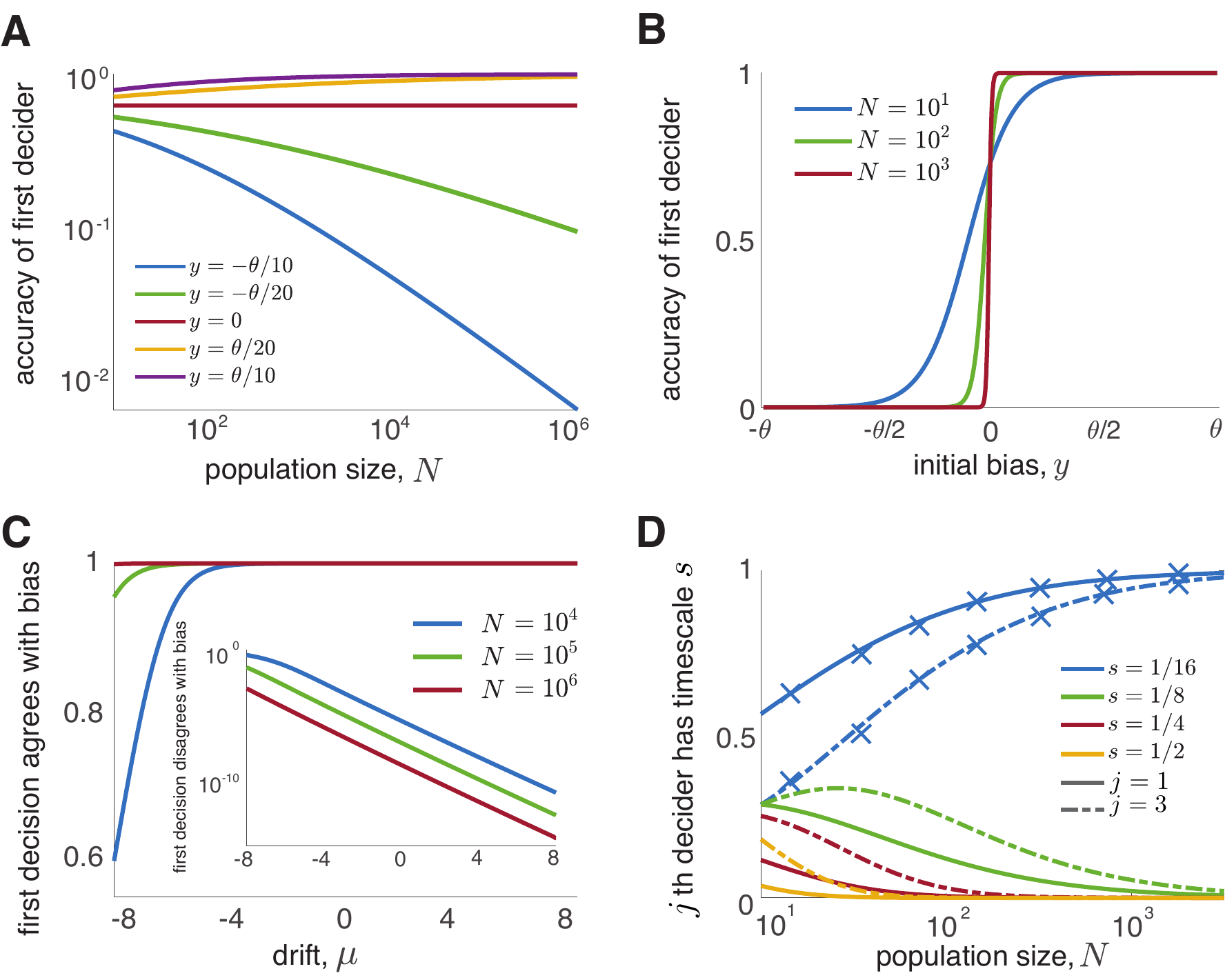}
\caption{First decider accuracy is determined by its initial bias. (A) The accuracy of the first decider as a function of population size, $N,$ for different initial biases, $y,$ obtained by quadrature. Curves are ordered by the proximity of the initial bias $y$ of the first decider to the correct threshold $+ \theta$. The drift, and hence the correct decision, are positive. (B) Under the same assumptions a small deviation from an unbiased initial belief strongly affects the probability of a correct first decision when  $N$ is large. (C) Drift weakly affects the first decision in populations with biased agents ($y = \theta/4$ here) when $N$ is large. See SM for decision polarity formulas. (D) In large populations in which all agents have the same initial bias, $y=\theta/2,$ but different diffusivities, early deciders (here first and third) have the shortest diffusive timescale. X's represent averages of stochastic simulations  over $10^6$ trials. }
\label{fig2}
\end{figure}

The choice of the fastest decider agrees with their initial bias:
e.g., if $\theta$ is the threshold closest to the most extreme initial belief, $x_{0}$, then $P(X_{n(1)}(T_1)= \theta)\to1 \text{ as }N\to\infty$ (See Fig.~\ref{fig2}A,B).  
Similar results hold when initial beliefs are drawn from a continuous distribution (See SM and next section). Thus, although all agents behave rationally, early decisions of biased agents tend to be less accurate~\cite{Karamched20,stickler2023impact}.


In contrast, the probability that a single 
agent - or one chosen randomly without regard to decision order -  decides incorrectly  can be made arbitrarily small by increasing the drift or threshold~\cite{Bogacz2006}.  In large populations with biased agents, drift and diffusion impact the probability of the first decision only through the prefactor in Eq.~(\ref{E:decay}),
$\eta_i$, and thus decrease in importance as population size diverges. If even a small proportion of a large population holds an initial bias, early
decisions are determined by the most extreme bias (Fig.~\ref{fig2}B) regardless of the drift (Fig.~\ref{fig2}C).
On the other hand, if all deciders are initially unbiased ($X_i(0) =0$ for all $i$), the probability the first decider
makes a correct choice is $(1 + \exp(-\mu \theta/D))^{-1}$~\cite{Bogacz2006}.

\paragraph{Heterogeneous population and continuous distribution of initial biases.} 
While we can obtain the most precise asymptotic results in the homogeneous case, our conclusions extend to
populations of agents with heterogeneous distributions of initial biases, drifts, diffusivities, and thresholds.  We again assume that each agent again starts with one of finitely many initial beliefs, $X_i(0) \in \{x_0,x_1, \ldots, x_{I-1}\}$ with drift and diffusivity sampled from a finite set of fixed size.  For each agent we define the diffusive timescale, 
\begin{align} \label{E:distance}
    S_i=\frac{L_i^2}{4D_i}>0.
\end{align}
By assumption, the
timescales $S_i$ follow a discrete distribution $
    \P(S=s_i)>0 $ with support on a finite set
$
    0<s_0 \le s_1\le s_2\le s_3\dots\le s_J,
$
and $S_{n(j)}$ refers to the timescale of the $j^{\rm th}$ agent to decide (See Fig.~\ref{fig2}D). We denote by $s$ the diffusive timescale of a generic agent.

In large populations, early deciders are those with the shortest diffusive timescales. In particular, we show in the SM that for every $\eps>0$ and \emph{fixed} $j\ge 1$, 
\begin{align}\label{simpleall}
    N^{1-s_1/s_0-\eps}
    \ll\P(S_{n(j)}>s_0)
    \ll N^{1-s_1/s_0+\eps}\quad\text{as }N\to\infty,
\end{align}
where we use the notation $f\ll g$ to mean $\lim_{N \rightarrow \infty} f/g=0$. We can thus conclude that
$
N^{1-s_1/s_0-\eps}
    =o\big(\P(S_{n(j)}>s_0)\big)$ and 
$
    \P(S_{n(j)}>s_0)
    =o\big(N^{1-s_1/s_0+\eps}\big)$ as $N\to\infty$.

These results agree with our earlier conclusion: If all agents share  the same diffusivity, then the fastest deciders are the agents who start closest to their decision thresholds. 
This is true regardless of the quality of the evidence they receive.
 Diffusivity can reduce the effective distance to the threshold according to Eq.~\eqref{E:distance}.  Thus, the fastest deciders are either those with the most 
extreme initial biases or those with the noisiest integration process, regardless of 
the drift, $\mu_i$. Indeed, how we model drift does not impact these conclusions, and they hold even if we model the evolution of beliefs as an 
Ornstein-Uhlenbeck process, as is frequently done in the psychophysics 
literature~\cite{busemeyer1993decision,veliz16}.  



\begin{figure}
\includegraphics[width = \columnwidth]{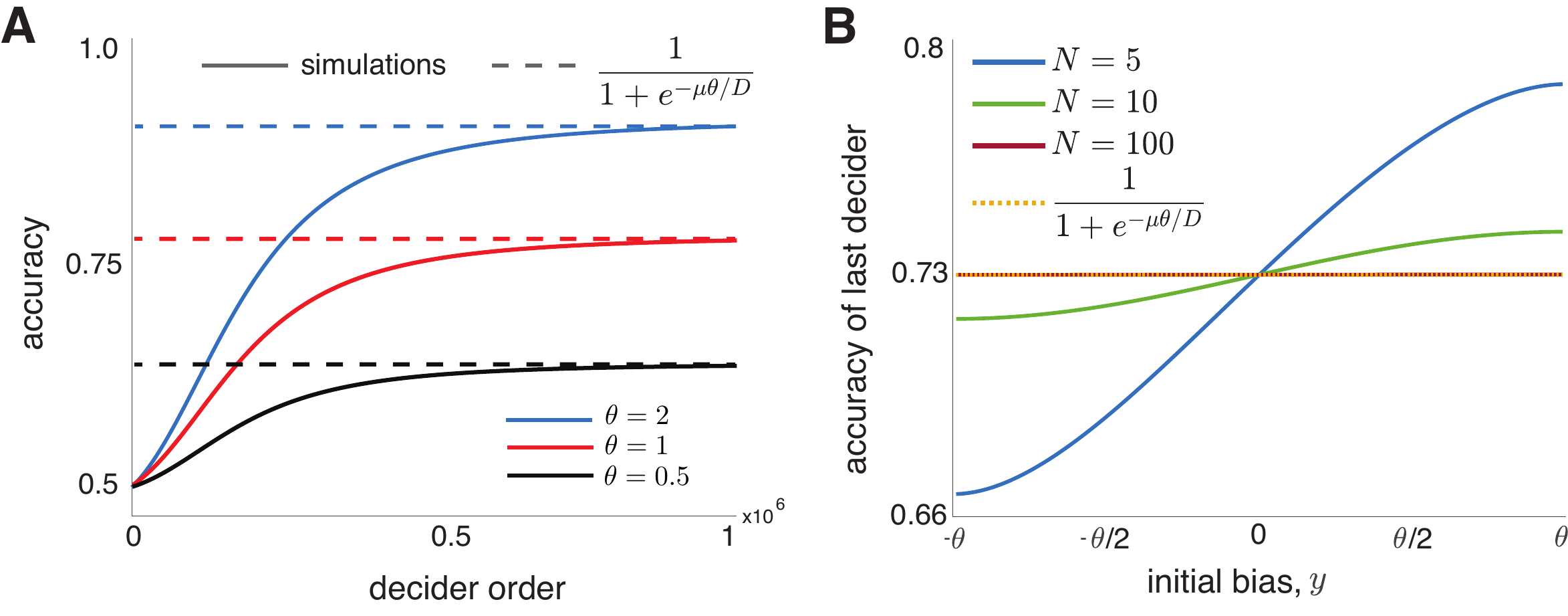}
\caption{Late deciders make choices as if they held no initial bias. (A) For large $N$, decision accuracy monotonically increases with decision order. The accuracy of late deciders approaches the accuracy of a single, initially unbiased agent. Here, all agents have initial bias $\theta/3$, and on each trial, $\mathbb{P}(H = H^+) = 0.5$. (B) In large groups even a large initial bias has no impact on the decision of later agents. Here, initial biases are sampled with uniform probability from $(-\theta,\theta)$.}
\label{fig3}
\end{figure}

\paragraph{Late deciders make decisions as if initially unbiased.}
We expect in large populations the inaccuracy of early deciders to be balanced by higher accuracy of late deciders~\cite{stickler2023impact}. Thus, we next determine the probability that the last agent to decide makes a correct decision.
In the SM we show that this probability has an intuitive form,
\begin{align} \label{E:last}
    \P(X_{n(N)}(T_N) = \theta) 
    \to\int_{-\theta}^{\theta} p_{\theta}(x)q(x)\,\dd x\quad\text{as }N\to\infty.
\end{align}
Here $p_{\theta}(x)$ is the probability that a single agent with initial bias 
$X(0) = x$ makes a correct decision, and $q(x)$ is the quasi-steady state distribution~\footnote{This is the distribution of beliefs conditioned on the absence of a decision after a long time.}
of beliefs evolving according to Eq.~\eqref{E:accumulation}. Thus the decision
of the last decider is made as if they forget their actual initial bias and instead sample an initial belief from the quasi-stationary distribution, $q(x)$. 

Eq.~\eqref{E:last} is general and can be extended to arbitrary domains. When
applied to the drift-diffusion process with decision boundaries at $\pm \theta$ we
show in the SM that $\P(X_{n(N)}(T_N) = \theta) 
    \to (1 + \exp(-\mu \theta/D))^{-1}$ as $N\to\infty$, which is the probability that
    a single, initially unbiased decider makes a correct decision (See Fig.~\ref{fig3}A)~\cite{Bogacz2006}. 
    Thus, the last decider forgets their initial bias and makes decisions based only on the accumulated evidence. The probability that an agent with a large initial bias makes a late decision is small. But should this happen, the initial bias will have little impact on their decision (See Fig.~\ref{fig3}B).

\begin{figure}
\includegraphics[width = 0.5\columnwidth]{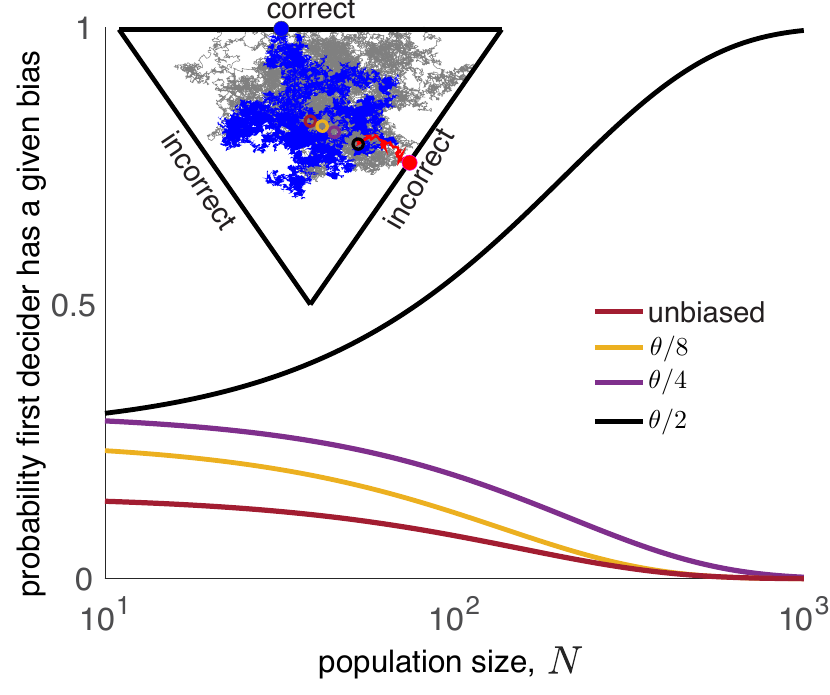}
\caption{Bias impacts multi-alternative and two-alternative decisions similarly in large groups. Beliefs about three options evolve on an equilateral triangle. Here, $\theta$ is the closest distance from the center of the triangle (burgundy ring) to the boundary. The initial bias is the distance from the triangle center to the initial belief, $X_i(0)$. As $N$ increases, the probability that the most biased agent chooses first grows. Curves are computed by averaging $10^6$ stochastic simulations. Inset:~Sample trajectories from a trial with biases sampled with equal probability from $\{\theta/2, \theta/4, \theta/8\}$. The first agent to decide (red) has the largest initial  bias. The belief of the last decider (blue) explores the space before reaching a threshold.}
\label{fig4}
\end{figure}

\paragraph{Extension to multiple alternatives.}
We can extend these results to decisions between $k$ alternatives.  
 Eq.~\eqref{E:accumulation} again describes the evolution of
beliefs, but now $X_i(t), \mu_i \in \R^{k-1}$ and
$W_i$ is a vector of independent Wiener processes~\footnote{We interpret beliefs as log-likelihood ratios. Therefore, with $k$ alternatives beliefs evolve in $k-1$ dimensions.}. Each belief
evolves on a domain, $\Omega \subset \mathbb{R}^{k-1}$, with $k$ boundaries~\cite{mcmillen2006dynamics}, 
each associated with one of the alternatives.
Agent $i$ chooses alternative $j$ if its belief, $X_i(t)$, crosses the associated boundary first. 
The boundaries that lead to the best decisions
are difficult to find analytically~\cite{tajima2019optimal}, but their exact shape is immaterial for our result.

In the SM we show that Eq.~\eqref{E:decay} holds for general domains in arbitrary dimensions (See Fig.~\ref{fig4}).
We therefore reach our earlier conclusions: In large homogeneous populations, the agents holding the most extreme initial beliefs make the first decisions, and their choices are consistent with their initial biases.
Our conclusions about the late decisions also carry over to agents facing multiple 
choices: The natural extension of Eq.~\eqref{E:last} holds with $q(x)$ the quasi-stationary 
distribution on $\Omega$.  The last decider makes a choice as if it sampled its initial belief from this quasi-stationary distribution.

\paragraph{Discussion}
Our decisions are often influenced by information we obtained previously and predilections we develop. 
In drift-diffusion models, prior evidence and initial inclinations are often represented by a shift in the initial state. 
We have shown that initial biases determine early decisions and have a diminishing impact on later decisions. 

An agent unaware of the order of their decision would believe this decision was made according to the evidence the agent accumulated and that the accuracy of their choice is determined only by the decision threshold~\cite{Bogacz2006}. Though early decisions are not always necessarily less accurate~\cite{karamched2020heterogeneity}, our work identifies a clear case in which hasty choices tend to be the most unreliable. Our findings also suggest a means of weighting choices of biased agents according to decision order in a large group when formulating collective decisions~\cite{marshall2017individual}. 
However, in social groups the exchange of social information between agents~\cite{Acemoglu2011,mossel2014opinion} or 
correlations in the evidence~\cite{stickler2023impact} will affect these results.

Ramping activity of individual neurons during decision making has been observed across the brain~\cite{shadlen1996,kiani2008bounded} (although see~\cite{latimer2015single}). Such dynamics may reflect the underlying evidence accumulation process preceding a decision and is often modeled by 
a drift-diffusion process. Decisions are thought to be triggered by the elevated activity of sufficiently many choice-related neurons~\cite{wang2012neural}. Our results suggest that in large neural populations decisions reflect the most extreme initial neural states, rather than the accumulated evidence, if the activity is uncorrelated. Since neural activity is often correlated~\cite{cohen2011measuring}, the effect of such biases  could be tempered.


While we have interpreted our results in the context of social decision theory, they apply more generally to independently evolving drift-diffusion processes on bounded domains~\cite{linn2022extreme}: In large populations early threshold crossings reflect only the initial states, while late crossings are independent of initial states and reflect the quasi-stationary distribution. Hence, early crossings reflect initial biases providing fast reactions needed for deadlined biophysical processes~\cite{grebenkov2020single}. If time allows, quorum sensing processes that weight passages by order could be used~\cite{koriat2012two}. Thus, our theory shows how initial biases can be used to implement population level tradeoffs between speed and accuracy.

\section{Acknowledgements}
ZPK and KJ were supported by NSF DMS-2207700. ZPK was supported by NIH BRAIN 1R01EB029847-01. SDL was supported by NSF DMS-2325258 and NSF DMS-1944574. KJ was supported by NSF-DBI-1707400 and NIH RF1MH130416. SL was supported by NSF Grant No. 2139322.

\newpage

\begin{center}
\Large{Supplemental Material}
\end{center}

\section{Mathematical preliminaries}\label{prelim}

Suppose $\{(\tau_{n},Z_{n})\}_{n\ge1}$ is an independent and identically distributed (iid) sequence of realizations of the pair of (possibly correlated) random variables $(\tau,Z)$. We have in mind that $\tau$ is the decision time (or first passage time (FPT)) of some decider whose stochastic evolution of beliefs is denoted by $\{X(t)\}_{t\ge0}$ and $Z$ is a vector containing information about this decider, such as their random initial position, drift, diffusivity, and decision made. 
Define the cumulative distribution function (CDF) of $\tau$,
\begin{align*}
F(t)
:=\P(\tau\le t).
\end{align*}
Further, for any event $E$ that is in the $\sigma$-algebra generated by $Z$, define
\begin{align*}
F_{E}(t)
:=\P(\tau\le t\cap E).
\end{align*}
In words, $E$ is any event for which we can know whether or not it occurred by knowing $Z$. For example, we are interested in events $E$ like $E=\{X(0)=\theta/2\}$, $E=\{X(0)\le 0\}$, $E=\{X(\tau)=\theta\}$, etc.

For a given $N\ge1$, let $n(j)\in\{1,\dots,N\}$ denote the (random) index of the $j$th fastest decider out of the first $N$ deciders to make a decision. That is, suppose we order the first $N$ FPTs (or first decision times),
\begin{align*}
T_{1,N}\le T_{2,N}\le \cdots\le T_{N-1,N}\le T_{N,N},
\end{align*}
where $T_{j,N}$ denotes the $j$th fastest FPT,
\begin{align}\label{Tkn}
T_{j,N}
:=\min\big\{\{\tau_{1},\dots,\tau_{N}\}\backslash\cup_{i=1}^{j-1}\{T_{i,N}\}\big\},\quad j\in\{1,\dots,N\}.
\end{align}
Then $n(j)$ is such that
\begin{align}\label{njn}
\tau_{n(j)}
=T_{j,N}.
\end{align}
In the examples of interest, the FPTs, $\tau,$ have continuous probability distributions (i.e.\ $F(t)$ is a continuous function) so that the event $\tau_{n^{*}}=\tau_{n'}<\infty$ for $n^{*}\neq n'$ has probability zero so there is no ambiguity in Eq.~\eqref{njn}.

Since we have the sequence $\{(\tau_{n},Z_{n})\}_{n\ge1}$, we denote $E_{n}$ the event $E$ as it pertains to the $n$th element in the sequence $\{(\tau_{n},Z_{n})\}_{n\ge1}$. For example, if $E=\{X(0)=\theta/2\}$, then $E_{n}=\{X_{n}(0)=\theta/2\}$. Similarly, $E_{n(j)}$ is the event $E$ as it pertains to $Z_{n(j)}$.

Throughout the Supplemental Material, we use the notation $\int f(t)\,\dd g(t)$ to denote the Riemann-Stieltjes integral of a function $f$ with respect to a function $g$.

\begin{proposition}\label{genint}
For any $j\in\{1,2,\dots,N\}$ (denoting an agent by the order $j$ of their decision), we have that
\begin{align}\label{jj}
\P(E_{n(j)})
=j{N\choose j}\int_{0}^{\infty}[F(t)]^{j-1}[1-F(t)]^{N-j}\,\dd F_{E}(t).
\end{align}
\end{proposition}

In the case $j=1$ (i.e.\ the fastest decider), Proposition~\ref{genint} implies
\begin{align}\label{j1}
\P(E_{n(1)})
=N\int_{0}^{\infty}[1-F(t)]^{N-1}\,\dd F_{E}(t).
\end{align}
Since $1-F$ is a decreasing function, Eq.~\eqref{j1} implies that the short-time behavior of $F$ and $F_{E}$ determine the large $N$ behavior of $\P(E_{n(1)})$. More generally, Proposition~\ref{genint} implies that the short-time behavior of $F$ and $F_{E}$ determine the large $N$ behavior of $\P(E_{n(j)})$ for $1\le j\ll N$.

In the case $j=N$ (i.e.\ the slowest decider), Proposition~\ref{genint} implies
\begin{align}\label{jN}
\P(E_{n(N)})
=N\int_{0}^{\infty}[F(t)]^{N-1}\,\dd F_{E}(t).
\end{align}
Since $F$ is an increasing function, Eq.~\eqref{jN} implies that the large-time behavior of $F$ and $F_{E}$ determine the large $N$ behavior of $\P(E_{n(N)})$. More generally, Proposition~\ref{genint} implies that the large-time behavior of $F$ and $F_{E}$ determine the large $N$ behavior of $\P(E_{n(N-j)})$ for $1\ll N-j$.

\section{Some integral asymptotics}

The following proposition is useful for estimating the large $N$ behavior of some integrals of the form in Eq.~\eqref{jj} and was proved in \cite{linn2022extreme} (See Proposition~2 in \cite{linn2022extreme}). Throughout the Supplemental Material, ``$f\sim g$'' denotes $f/g\to1$ (e.g., as $N\to \infty$ or as $t \to 0$).

\begin{proposition}\label{p1}
Assume $C_{+}>C>0$, $A>0$, and $p,q\in\R$. Then there exists a $\delta_{0}>0$ so that for all $\delta\in(0,\delta_{0}]$, we have
\begin{align*}
&\int_{0}^{\delta}t^{q-2}e^{-C_{+}/t}\big(1-At^{p}e^{-C/t}\big)^{N-1}\,\dd t
\sim
{{\eta}}
(\ln N)^{p{\beta}-q}
N^{-{\beta}}
\quad\text{as }N\to\infty,
\end{align*}
where
\begin{align*}
{\beta}
&=C_{+}/C>1,\quad
{{\eta}}
=C^{q-1}(AC^{p})^{-\beta}\Gamma(\beta)>0,
\end{align*}
and $\Gamma({\beta}):=\int_{0}^{\infty}z^{{\beta}-1}e^{-z}\,\dd z$ denotes the gamma function.
\end{proposition}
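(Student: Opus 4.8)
The plan is to reduce the integral to a standard Gamma-function limit through two successive changes of variables, after which the claimed asymptotics can be read off directly. Heuristically, the factor $\big(1-At^{p}e^{-C/t}\big)^{N-1}$ acts as a sharp cutoff: it is essentially $1$ once $t$ is small enough that $NAt^{p}e^{-C/t}\ll 1$, and essentially $0$ once $NAt^{p}e^{-C/t}\gg 1$, so the integral concentrates near the transition scale $t^{\ast}\approx C/\ln N$. At that scale $e^{-C_{+}/t}\approx N^{-C_{+}/C}=N^{-\beta}$, which is the source of the power of $N$; the power of $\ln N$ and the constant $\Gamma(\beta)$ emerge only after resolving the transition region precisely.

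First I would substitute $u=C/t$ (so that $e^{-C/t}=e^{-u}$ and $e^{-C_{+}/t}=e^{-\beta u}$) and write $B:=AC^{p}$, obtaining the exact identity
\begin{align*}
\int_{0}^{\delta}t^{q-2}e^{-C_{+}/t}\big(1-At^{p}e^{-C/t}\big)^{N-1}\,\dd t
=C^{q-1}\int_{C/\delta}^{\infty}u^{-q}e^{-\beta u}\big(1-Bu^{-p}e^{-u}\big)^{N-1}\,\dd u.
\end{align*}
Shrinking $\delta_{0}$ if necessary guarantees $Bu^{-p}e^{-u}\in[0,1)$ on the domain; this is the sole role of $\delta_{0}$. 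Next I would substitute $r=NBu^{-p}e^{-u}$, which is strictly decreasing in $u$ with $\dd r=-r\tfrac{u+p}{u}\,\dd u$ and which turns the cutoff into $\big(1-r/N\big)^{N-1}$. Using $e^{-u}=ru^{p}/(NB)$ to rewrite $u^{-q}e^{-\beta u}=r^{\beta}u^{p\beta-q}/(NB)^{\beta}$, the integral becomes exactly
\begin{align*}
\frac{C^{q-1}}{(NB)^{\beta}}\int_{0}^{r_{\max}}r^{\beta-1}\big(1-r/N\big)^{N-1}\,\frac{u^{p\beta-q+1}}{u+p}\,\dd r,
\end{align*}
where $u=u(r,N)$ solves $u+p\ln u=\ln(NB/r)$ and $r_{\max}=NA\delta^{p}e^{-C/\delta}\to\infty$.

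It then remains to show that the remaining integral, call it $J_{N}$, satisfies $J_{N}\sim(\ln N)^{p\beta-q}\,\Gamma(\beta)$; combined with the prefactor $C^{q-1}/(NB)^{\beta}=C^{q-1}(AC^{p})^{-\beta}N^{-\beta}$ this yields exactly $\eta(\ln N)^{p\beta-q}N^{-\beta}$. For this I would factor $u^{p\beta-q+1}/(u+p)=(\ln N)^{p\beta-q}\,(u/\ln N)^{p\beta-q}\,u/(u+p)$ and divide through by $(\ln N)^{p\beta-q}$. For each fixed $r$ the defining relation gives $u\sim\ln N$ as $N\to\infty$, so $(u/\ln N)^{p\beta-q}\,u/(u+p)\to 1$ and $\big(1-r/N\big)^{N-1}\to e^{-r}$; hence the integrand converges pointwise to $r^{\beta-1}e^{-r}$, whose integral over $(0,\infty)$ is $\Gamma(\beta)$. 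Passing the limit inside the integral by dominated convergence then completes the proof.

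The main obstacle is precisely this domination step, because the slowly varying factor $(u/\ln N)^{p\beta-q}\,u/(u+p)$ depends on $N$ both explicitly and implicitly through $u$. I would control it using the two-sided bound $u\asymp\ln(NB/r)$ (valid once $u$ is large, uniformly in $r\le r_{\max}$) extracted from $u+p\ln u=\ln(NB/r)$: this keeps $u/\ln N$ bounded away from $0$ and $\infty$ for $r$ in compact subsets of $(0,\infty)$, while near $r=0$ the factor grows at most like a power of $\ln(1/r)$, which is integrable against $r^{\beta-1}$ since $\beta>1$, and near $r=r_{\max}\asymp N$ any polynomial growth is overwhelmed by $\big(1-r/N\big)^{N-1}\le e^{-r/2}$. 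Assembling these estimates into a single $N$-independent integrable majorant (with the inevitable case split on the sign of $p\beta-q$) is the only genuinely technical part; everything else is bookkeeping.
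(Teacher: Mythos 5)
Your proposal is correct, but there is no in-paper proof to compare it against line by line: the paper never proves Proposition~\ref{p1}, it imports it from Ref.~\cite{linn2022extreme} (Proposition~2 there) and uses it only as a black box inside the proofs of Theorems~\ref{detailprobj} and~\ref{logprob}, which reduce to it by sandwiching $F$ and $F_{+}$ between explicit bounds and integrating by parts. Your argument is therefore a genuinely self-contained substitute for that citation, and it checks out: both substitutions are exact identities ($u=C/t$, then $r=NBu^{-p}e^{-u}$ with $B=AC^{p}$), the Jacobian and the rewriting $u^{-q}e^{-\beta u}=r^{\beta}u^{p\beta-q}/(NB)^{\beta}$ are right, the prefactor $C^{q-1}(NB)^{-\beta}=C^{q-1}(AC^{p})^{-\beta}N^{-\beta}$ matches $\eta$ once $J_{N}\sim\Gamma(\beta)(\ln N)^{p\beta-q}$ is established, and the pointwise limits $u/\ln N\to1$, $(1-r/N)^{N-1}\to e^{-r}$ are correct. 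Two details deserve care. First, $\delta_{0}$ does slightly more than force $Bu^{-p}e^{-u}<1$: you also need $C/\delta>-p$ so that $r$ is strictly decreasing in $u$ and $u(r,N)$ is well defined; this matters only when $p<0$. Second, in the domination step your majorant is valid as stated on $\{r\le\sqrt{N}\}$, where $u\ge c\ln N$ yields the $N$-free bound $c\,r^{\beta-1}e^{-r/2}\bigl(1+\ln_{+}(1/r)\bigr)^{|p\beta-q|}$; but near $r\approx r_{\max}\asymp N$, where $u$ can be as small as $C/\delta$, the factor $(u/\ln N)^{p\beta-q}$ grows like $(\ln N)^{q-p\beta}$ when $q>p\beta$, so the bound you quote there is not $N$-independent. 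The standard repair is either to split the integral at $r=\sqrt{N}$ and show the outer region's contribution is $O\bigl((\ln N)^{|q-p\beta|}e^{-\sqrt{N}/2}\bigr)$, or to convert $\ln N$ growth into $\ln r$ growth using $N\ge r/(A\delta^{p}e^{-C/\delta})$ on the domain; either way the step completes exactly as you indicate, so this is a refinement rather than a gap. As for what each route buys: yours makes the supplement self-contained and exhibits $N^{-\beta}$, $(\ln N)^{p\beta-q}$, and $\Gamma(\beta)$ as the limit of a single explicit integral, while the paper's choice keeps the supplement short by outsourcing precisely this Laplace-type analysis to the published reference.
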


The following result estimates integrals of the form in Eq.~\eqref{jj} for $1\le j\ll N$ assuming that $F(t)$ and $F_{+}(t)$ have short-time $t$ behavior that is characteristic of diffusion. 

\begin{theorem}\label{detailprobj}
Assume $F(t)$ and $F_{+}(t)$ are bounded, nondecreasing, continuous from the right, and satisfy
\begin{align}
F(t)
&\sim At^{p}e^{-C_{0}/t}\quad\text{as }t\to0^+,\label{s1}\\
F_{+}(t)
&\sim Bt^{q}e^{-C_{+}/t}\quad\text{as }t\to0^+,\label{s2}
\end{align}
where $C_{+}>C_{0}>0$, $A>0$, $B>0$, and $p,q\in\R$. Then for any fixed integer $j\ge1$, we have
\begin{align*}
j{N\choose j}\int_{0}^{\infty}[F(t)]^{j-1}[1-F(t)]^{N-j}\,\dd F_{+}(t)
\sim
{{\eta(j)}}
(\ln N)^{p{\beta}-q}
N^{1-{\beta}}
\quad\text{as }N\to\infty,
\end{align*}
where
\begin{align}\label{bk}
{\beta}
&:=C_{+}/C_{0}>1,\quad
{{\eta(j)}}
:=B(C_{0})^{q-p\beta}A^{-\beta}\Gamma(j)\Gamma(\beta+j)
>0,
\end{align}
and $\Gamma(x):=\int_{0}^{\infty}z^{x-1}e^{-z}\,\dd z$ denotes the gamma function.
\end{theorem}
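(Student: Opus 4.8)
The plan is to exploit the observation following Proposition~\ref{genint}: for fixed $j$ the large-$N$ behaviour is governed entirely by the $t\to0^+$ behaviour of $F$ and $F_{+}$. First I would discard the tail. Since $F(t)\sim At^pe^{-C_0/t}>0$ for small $t$ and $F$ is nondecreasing, fix $\delta>0$ with $c:=F(\delta)\in(0,1)$; then on $[\delta,\infty)$ one has $[F]^{j-1}\le1$, $[1-F]^{N-j}\le(1-c)^{N-j}$ and $\int_\delta^\infty\dd F_{+}\le F_{+}(\infty)<\infty$, so $\int_\delta^\infty$ contributes $O((1-c)^N)$, negligible beside $(\ln N)^{p\beta-q}N^{1-\beta}$. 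Hence only $\int_0^\delta$ survives.

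The essential difficulty is that the hypotheses describe the \emph{functions} $F,F_{+}$, whereas the integral is driven by the \emph{measure} $\dd F_{+}$, and $\sim$ for distribution functions does not pass to their Stieltjes measures. To linearise this I would push the measure forward under $u=F(t)$ (valid since $F$ is continuous and strictly increasing near $0$ in the cases of interest), rewriting the integral as $\int_0^{F(\delta)}u^{j-1}(1-u)^{N-j}\,\dd\widehat{F}_{+}(u)$ with $\widehat{F}_{+}:=F_{+}\circ F^{-1}$. Inverting $u\sim At^pe^{-C_0/t}$ gives $t\sim C_0/\ln(1/u)$, and feeding this into the asymptotics for $F_{+}$ produces the clean relation $\widehat{F}_{+}(u)\sim Ku^{\beta}(\ln(1/u))^{p\beta-q}$ as $u\to0^+$, where $\beta=C_+/C_0$ and $K=BC_0^{q-p\beta}A^{-\beta}$. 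A single integration by parts in $u$ then moves the differential onto the smooth factor; the boundary terms vanish (at $0$ because $u^{j-1}\widehat{F}_{+}(u)\to0$, at $F(\delta)$ by the exponentially small $(1-F(\delta))^{N-j}$), leaving an ordinary integral in which $\widehat{F}_{+}$ appears undifferentiated and may be replaced by its leading form.

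What remains are Beta integrals $\int_0^{u_0}u^{a-1}(1-u)^{N-j}\,\dd u\sim\Gamma(a)N^{-a}$, concentrated at $u=O(1/N)$, where the slowly varying factor $(\ln(1/u))^{p\beta-q}$ freezes at $(\ln N)^{p\beta-q}$. Differentiating $u^{j-1}(1-u)^{N-j}$ gives two such terms; using $\Gamma(\beta+j)=(\beta+j-1)\Gamma(\beta+j-1)$ they collapse to a single contribution proportional to $\beta\Gamma(\beta+j-1)$, and multiplying by $j\binom Nj=\Gamma(N+1)/(\Gamma(j)\Gamma(N-j+1))$ together with $\Gamma(N+1)/\Gamma(N+\beta)\sim N^{1-\beta}$ reproduces the stated scaling $(\ln N)^{p\beta-q}N^{1-\beta}$. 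An equivalent route, closer to the machinery already in place, is to absorb $[F]^{j-1}$ into the exponential and apply Proposition~\ref{p1} with the shifted parameters $C_+\mapsto C_++(j-1)C_0$ and $\beta\mapsto\beta+j-1$, the replacement $N-j\mapsto N-1$ being harmless for fixed $j$.

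The main obstacle is promoting the function-level asymptotic for $\widehat{F}_{+}$ (equivalently $\dd F_{+}$) to an integral-level asymptotic uniformly over the $N$-dependent window $u\sim1/N$, while carrying the slowly varying logarithm; I would handle this by sandwiching $\widehat{F}_{+}$ between $(1\pm\eps)$ multiples of its leading form and letting $\eps\downarrow0$ after a dominated-convergence/Laplace estimate, exactly as in the proof of Proposition~\ref{p1}. The second delicate point is the exact prefactor: the index shift by $j-1$ makes my accounting yield $\tfrac{\beta\,\Gamma(\beta+j-1)}{\Gamma(j)}\,BC_0^{q-p\beta}A^{-\beta}$, so before trusting it I would verify against the exactly solvable case $p=q=0$, where $\widehat{F}_{+}(u)=u^{\beta}$ and the integral is precisely $\beta\,B(\beta+j-1,\,N-j+1)$.
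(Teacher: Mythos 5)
Your proposal is sound, and it takes a genuinely different route from the paper's. The paper never changes variables: it sandwiches $F$ and $F_{+}$ between $(1\pm\eps)$ multiples of their asymptotic forms on $(0,\delta)$, integrates by parts in $t$ to trade the measure $\dd F_{+}$ for the function $F_{+}$ (exactly the difficulty you isolate), applies Proposition~\ref{p1} to the two resulting Laplace-type integrals, and lets $\eps\downarrow0$; the tail $I_{\delta,\infty}$ is discarded just as you discard it. Your pushforward under $u=F(t)$, with $\widehat{F}_{+}(u)\sim BA^{-\beta}C_{0}^{q-p\beta}u^{\beta}(\ln(1/u))^{p\beta-q}$, instead reduces everything to Beta integrals carrying a slowly varying factor, which is cleaner and makes the combinatorics of the prefactor transparent. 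Its one cost is that the substitution needs $F$ to be continuous and strictly increasing near $0$, which the hypotheses (bounded, nondecreasing, right-continuous) do not grant; this is repairable (the asymptotic $F(t)\sim At^{p}e^{-C_{0}/t}$ forces jump ratios and flat parts of $F$ to become negligible as $t\to0^{+}$), or you can simply run your $(1\pm\eps)$-sandwich in the $t$ variable, as the paper does, and keep the rest of your argument.

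The important point is your final paragraph: your prefactor is right, and the prefactor stated in Eq.~\eqref{bk} is wrong for $j\ge2$. Differentiating $u^{j-1}(1-u)^{N-j}$ (equivalently, the paper's own integration by parts) produces two \emph{same-order} terms, contributing $\Gamma(\beta+j)$ and $-(j-1)\Gamma(\beta+j-1)$, which combine to $\beta\Gamma(\beta+j-1)$; multiplying by $j{N\choose j}\sim N^{j}/\Gamma(j)$ then gives
\begin{align*}
\eta(j)=\frac{\beta\,\Gamma(\beta+j-1)}{\Gamma(j)}\,BC_{0}^{q-p\beta}A^{-\beta},
\end{align*}
as you claim. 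Your exactly solvable test case settles it: take $F(t)=e^{-C_{0}/t}$ and $F_{+}=F^{\beta}$ (so $p=q=0$, $A=B=1$); then the left-hand side of the theorem equals \emph{exactly} $\frac{\beta\Gamma(\beta+j-1)}{\Gamma(j)}\frac{\Gamma(N+1)}{\Gamma(N+\beta)}$, which for $\beta=j=2$ is $4/(N+1)$, whereas Eq.~\eqref{bk} predicts $\Gamma(2)\Gamma(4)N^{-1}=6N^{-1}$. The two formulas coincide at $j=1$ (both give $\Gamma(\beta+1)BC_{0}^{q-p\beta}A^{-\beta}$) precisely because the negative term carries the factor $(j-1)$; since only $j=1$ feeds into Eq.~(3) of the main text, the headline results are unaffected, but the stated $\eta(j)$ and the remark $\eta(j)=\frac{(j-1)!\Gamma(\beta+j)}{\Gamma(\beta+1)}\eta(1)$ following the theorem should be corrected to $\eta(j)=\frac{\Gamma(\beta+j-1)}{\Gamma(\beta)\Gamma(j)}\eta(1)$, which is still increasing in $j$ since $\beta>1$.
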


Notice that the asymptotic behavior found in Theorem~\ref{detailprobj} as $N \to \infty$ is independent of $j\ge1$, except for the constant prefactor $\eta(j)$. Further, this prefactor is an increasing function of $j$ and satisfies
\begin{align*}
    \eta(j)=\frac{(j-1)!\Gamma(\beta+j)}{\Gamma(\beta+1)}\eta(1),\quad j\ge1.
\end{align*}

The asymptotic behavior in Eq.~\eqref{s1}-\eqref{s2} is typical for diffusion, but computing the prefactors $A$ and $B$ and the powers $p$ and $q$ can be challenging \cite{lawley2020dist}. Indeed, these constants depend on the details of the system (e.g.,~drift, space dimension, geometry of the domain, etc.). However, the constants in the exponents $C_{0}$ and $C_{+}$ are more universal and can be obtained in a very general mathematical setting \cite{Lawley20b}. The following result yields estimates on the fastest deciders when we only know these constants, which is equivalent to knowing the short-time behavior of $F_{+}(t)$ and $F(t)$ on a logarithmic scale.

\begin{theorem} \label{logprob}
Assume $F(t)$ and $F_{+}(t)$ are bounded, nondecreasing, continuous from the right, and satisfy
\begin{align}\label{lb}
\lim_{t\to0^+}t\ln F(t)
&= -C_{0}<0,\quad
\lim_{t\to0^+}t\ln F_{+}(t)
\le -C_{+}<0,
\end{align}
where $C_{+}>C_{0}>0$. Then for every $\eps>0$, 
\begin{align}\label{ub}
j{N\choose j}\int_{0}^{\infty}[F(t)]^{j-1}[1-F(t)]^{N-j}\,\dd F_{+}(t)
=o(N^{1-{\beta}+\eps})
\quad\text{as }N\to\infty,
\end{align}
where
\begin{align*}
{\beta}
&:=C_{+}/C_{0}>1.
\end{align*}
If, in addition, we assume that
\begin{align}\label{further}
    \lim_{t\to0^+}t\ln F_{+}(t)
= -C_{+}<0,
\end{align}
then for every $\eps>0$,
\begin{align*}
    N^{1-\beta-\eps}
    =o\bigg(j{N\choose j}\int_{0}^{\infty}[F(t)]^{j-1}[1-F(t)]^{N-j}\,\dd F_{+}(t)\bigg)\quad\text{as }N\to\infty.
\end{align*}
\end{theorem}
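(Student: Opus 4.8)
The plan is to prove both inequalities by reducing to the precise single-integral asymptotics of Proposition~\ref{p1}, after replacing $F$ and $F_{+}$ by pure-exponential comparison functions. Write $I_{N}:=j{N\choose j}\int_{0}^{\infty}[F(t)]^{j-1}[1-F(t)]^{N-j}\,\dd F_{+}(t)$. The hypothesis \eqref{lb} means that for every $\gamma>0$ there is a $\delta>0$, which we may shrink at will, such that on $(0,\delta]$ we have $e^{-(C_{0}+\gamma)/t}\le F(t)\le e^{-(C_{0}-\gamma)/t}$ and $F_{+}(t)\le e^{-(C_{+}-\gamma)/t}$; the extra hypothesis \eqref{further} additionally yields $F_{+}(t)\ge e^{-(C_{+}+\gamma)/t}$ on $(0,\delta]$. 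The tail $[\delta,\infty)$ is harmless in both directions: there $1-F(t)\le 1-F(\delta)<1$, so $[1-F]^{N-j}$ decays geometrically in $N$, and since $j{N\choose j}$ grows only polynomially and $\int\dd F_{+}\le 1$, the tail contribution is exponentially small and negligible against any power $N^{1-\beta\pm\eps}$.

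The main obstacle is that for $j>1$ the integrand $[F(t)]^{j-1}[1-F(t)]^{N-j}$ is not monotone in $F$---the factor $[F]^{j-1}$ increases while $[1-F]^{N-j}$ decreases---so one cannot simply sandwich $F$ between comparison functions. I would resolve this by absorbing the increasing factor into the integrating measure. For the upper bound, bound $[F(t)]^{j-1}\le e^{-(j-1)(C_{0}-\gamma)/t}$ and set $\widetilde F_{+}(t):=\int_{(0,t]}e^{-(j-1)(C_{0}-\gamma)/s}\,\dd F_{+}(s)$, so that $[F]^{j-1}\,\dd F_{+}\le\dd\widetilde F_{+}$ as measures; since $s\mapsto e^{-(j-1)(C_{0}-\gamma)/s}$ is increasing, $\widetilde F_{+}(t)\le e^{-(j-1)(C_{0}-\gamma)/t}F_{+}(t)\le e^{-\widetilde C_{+}/t}$ with $\widetilde C_{+}:=(j-1)(C_{0}-\gamma)+(C_{+}-\gamma)$. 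Bounding the remaining factor by the nonincreasing function $(1-e^{-(C_{0}+\gamma)/t})^{N-j}$ and invoking the elementary fact that $\int_{(0,\delta]}h\,\dd\mu\le\int_{(0,\delta]}h\,\dd\nu$ for nonincreasing $h\ge0$ whenever $\mu\le\nu$ pointwise with $\mu(0^{+})=\nu(0^{+})=0$ (a short Fubini argument after writing $h(t)=h(\delta)+\int_{(t,\delta]}(-\dd h)$), I would replace $\widetilde F_{+}$ by $e^{-\widetilde C_{+}/t}$. What remains is exactly an integral of the form in Proposition~\ref{p1} with $A=1$, $p=q=0$, $C=C_{0}+\gamma$, and exponent-constant $\widetilde C_{+}$ (the power $N-j$ becomes the $N-1$ of Proposition~\ref{p1} by relabeling $N':=N-j+1\sim N$). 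Proposition~\ref{p1} then gives a bound $\sim\text{const}\cdot N^{-\widetilde C_{+}/(C_{0}+\gamma)}$, and multiplying by $j{N\choose j}\sim N^{j}/(j-1)!$ yields $N^{\,j-\widetilde C_{+}/(C_{0}+\gamma)}$; a direct computation shows this exponent equals $1-\beta+\gamma\big((2j-1)C_{0}+C_{+}\big)\big/\big(C_{0}(C_{0}+\gamma)\big)$, which is below $1-\beta+\eps$ once $\gamma$ is small, proving $I_{N}=o(N^{1-\beta+\eps})$.

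For the lower bound I would run the same absorption with reversed inequalities, $[F(t)]^{j-1}\ge e^{-(j-1)(C_{0}+\gamma)/t}$ and $[1-F(t)]^{N-j}\ge(1-e^{-(C_{0}-\gamma)/t})^{N-j}$, discard the nonnegative tail, and apply the measure-comparison inequality in the opposite direction. The one genuinely new point is that I now need a \emph{lower} bound on the absorbed measure $\widetilde F_{+}(t)=\int_{(0,t]}e^{-(j-1)(C_{0}+\gamma)/s}\,\dd F_{+}(s)$; since its integrand is increasing I cannot pull it out at the top of the interval, so I instead restrict to a sub-interval $(at,t]$ with a fixed $a\in\big(0,(C_{+}-\gamma)/(C_{+}+\gamma)\big)$. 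There the integrand is $\ge e^{-(j-1)(C_{0}+\gamma)/(at)}$, while the two-sided bounds on $F_{+}$ give $F_{+}(t)-F_{+}(at)\ge\tfrac12 e^{-(C_{+}+\gamma)/t}$ for small $t$---the constraint on $a$ is precisely what forces the subtracted term to be negligible. This produces $\widetilde F_{+}(t)\ge\tfrac12 e^{-\widetilde C_{+}/t}$ with $\widetilde C_{+}:=(j-1)(C_{0}+\gamma)/a+(C_{+}+\gamma)$, and Proposition~\ref{p1} together with $j{N\choose j}\sim N^{j}/(j-1)!$ then gives $I_{N}\gg N^{\,j-\widetilde C_{+}/(C_{0}-\gamma)}$. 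As $\gamma\to0^{+}$ and $a\to1^{-}$ this exponent tends to $1-\beta$ by continuity, so a suitable choice of internal parameters makes it exceed $1-\beta-\eps$, establishing $N^{1-\beta-\eps}=o(I_{N})$. I expect this sub-interval estimate---tracking how $a$ and $\gamma$ feed into the final exponent while keeping the subtracted $F_{+}(at)$ term under control---to be the most delicate part of the argument, though every individual step is elementary.
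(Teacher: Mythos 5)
Your proof is correct, but it is not the paper's argument: the two differ in how the awkward increasing factor $[F(t)]^{j-1}$ is handled, and the difference turns out to matter. Both proofs localize to $(0,\delta]$, sandwich $F$ and $F_{+}$ between pure exponentials with $\pm\gamma$ slack, discard the exponentially small tail, and finish with Proposition~\ref{p1}, arriving at the same limiting exponents. The paper (mirroring its proof of Theorem~\ref{detailprobj}) integrates by parts, so the derivative falls on the whole comparison product and generates a positive $(N-j)$-term and a \emph{subtracted} $(j-1)$-term; see Eqs.~\eqref{ibp055j} and \eqref{scareful}. You instead absorb $[F]^{j-1}$ into the Stieltjes measure, $\dd\widetilde F_{+}=e^{-(j-1)(C_{0}\mp\gamma)/s}\,\dd F_{+}$, and invoke the monotone-integrand comparison, so only the nonincreasing factor $\big(1-e^{-(C_{0}\pm\gamma)/t}\big)^{N-j}$ is ever differentiated and no sign-indefinite terms appear. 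This buys genuine robustness in the lower bound when $j\ge2$: in the paper's Eq.~\eqref{scareful}, Proposition~\ref{p1} gives the positive term order $N^{1-[(j-1)(C_{0}+\eps)+(C_{0}-\eps)+(C_{+}+\eps)]/(C_{0}-\eps)}$ and the subtracted term order $N^{-[(j-1)(C_{0}+\eps)+(C_{+}-\eps)]/(C_{0}-\eps)}$, whose ratio is $N^{-2\eps/(C_{0}-\eps)}\to0$; with only logarithmic-scale control of $F$ and $F_{+}$, the subtracted term therefore dominates, the right-hand side of \eqref{scareful} eventually becomes negative, and that argument is vacuous for $j\ge2$ (it is sound for $j=1$, where the subtracted term carries the factor $j-1=0$). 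Your scheme has no such cancellation: your only subtraction, $F_{+}(t)-F_{+}(at)\ge e^{-(C_{+}+\gamma)/t}-e^{-(C_{+}-\gamma)/(at)}$, compares exponentials whose rates are strictly separated by the constraint $a<(C_{+}-\gamma)/(C_{+}+\gamma)$, so it is controlled pointwise in $t$ rather than by balancing two $N$-asymptotics of equal order. Your exponent arithmetic is right in both directions (indeed $j-\widetilde C_{+}/(C_{0}+\gamma)=1-\beta+\gamma[(2j-1)C_{0}+C_{+}]/[C_{0}(C_{0}+\gamma)]$ in the upper bound, and the lower-bound exponent tends to $1-\beta$ as $\gamma\to0^{+}$, $a\to1^{-}$), and the parameter ordering (pick $\gamma$, then $a$) closes the argument. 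What the paper's route offers in exchange is brevity and uniformity with the proof of Theorem~\ref{detailprobj}, and for the upper bound---where the subtracted term is simply dropped---the two approaches are equally effective; for the lower bound with $j\ge2$, yours is the one that actually works.
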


The following result estimates integrals of the form in Eq.~\eqref{jj} for $1\ll N-j\le N$ assuming that $F(t)$ and $f_i(t)=F_{i}'(t)$ have large-time $t$ behavior that is characteristic of diffusion in a bounded domain.

\begin{theorem}\label{estslow}
    Assume $F(t)\in[0,1)$ is continuous and nondecreasing and $f_i(t)$ is continuous and bounded and
    \begin{align*}
        F(t)
        &=1-ce^{-\lambda t}+\hot\quad\text{as }t\to\infty,\\
        f_i(t)
        &=\lambda c_i e^{-\lambda t}+\hot\quad\text{as }t\to\infty,
    \end{align*}
    where $\lambda>0$, $c>0$, $c_i>0$. 
    Then for any fixed $j\ge0$, we have that
    \begin{align*}
        (N-j){N\choose N-j}\int_0^\infty[F(t)]^{N-j-1}[1-F(t)]^j f_i(t)\,\dd t
        \to\frac{c_i}{c}\quad\text{as }N\to\infty.
    \end{align*}
\end{theorem}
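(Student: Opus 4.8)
The plan is to exploit the slowest-decider integral identity~\eqref{jN} and its generalization, and to show that the large-time tail behavior of $F$ and $f_i$ is all that survives in the limit $N\to\infty$. The key structural observation is that $[F(t)]^{N-j-1}$ concentrates all its mass near $t=\infty$: since $F(t)<1$ for all finite $t$, the factor $[F(t)]^{N-j-1}$ decays to $0$ pointwise for each fixed $t$, so the integral is dominated by the region where $F(t)$ is close to $1$, i.e.\ where the assumed asymptotics $F(t)=1-ce^{-\lambda t}+\hot$ and $f_i(t)=\lambda c_i e^{-\lambda t}+\hot$ apply. The natural first step is therefore a change of variables that rescales this tail region. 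Writing $u=ce^{-\lambda t}$ (so that $F\approx 1-u$ near infinity), one finds $\dd t = -\dd u/(\lambda u)$, and the dominant contribution becomes, heuristically,
\begin{align*}
(N-j){N\choose N-j}\int (1-u)^{N-j-1}\,u^{j}\,\frac{c_i}{c}\,\dd u,
\end{align*}
which is a Beta integral. Indeed $(N-j)\binom{N}{N-j}\int_0^1 (1-u)^{N-j-1}u^j\,\dd u = (N-j)\binom{N}{N-j}B(j+1,N-j)=1$ by the standard Beta--factorial identity, so the leading term collapses to exactly $c_i/c$. This confirms the stated limit and identifies the mechanism.

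To make this rigorous I would proceed as follows. First I would fix a large cutoff $T$ and split the integral at $t=T$. On $[0,T]$ the integrand is bounded (using that $f_i$ is bounded and $F(T)=:F_T<1$), so this piece is at most $(N-j)\binom{N}{N-j}F_T^{N-j-1}\cdot\text{const}$; since $(N-j)\binom{N}{N-j}$ grows only polynomially in $N$ while $F_T^{N-j-1}$ decays geometrically, this contribution vanishes as $N\to\infty$. Second, on $[T,\infty)$ I would use the assumed asymptotics to write $F(t)=1-ce^{-\lambda t}(1+o(1))$ and $f_i(t)=\lambda c_i e^{-\lambda t}(1+o(1))$, where the $o(1)$ terms are uniformly small once $T$ is chosen large enough (say bounded in absolute value by any prescribed $\eps$). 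Third, I would perform the substitution $u=ce^{-\lambda t}$ on this tail, turning the integral into one over $u\in(0,cе^{-\lambda T}]$ of the Beta type above, with the $(1+o(1))$ factors carried along. The polynomial prefactor $(N-j)\binom{N}{N-j}$ combines with the Beta integral to give $1+O(1/N)$ after accounting for the truncated upper limit (the tail $u\in(ce^{-\lambda T},1)$ contributes negligibly because $(1-u)^{N-j-1}$ is exponentially small there, by the same geometric-decay argument as in the first step).

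The main obstacle will be controlling the error terms uniformly so that one can interchange the limit $N\to\infty$ with the cutoff limit $T\to\infty$. Concretely, one must show that the $\hot$ (i.e.\ $o(e^{-\lambda t})$) corrections to $F$ and $f_i$ do not accumulate once multiplied against $(N-j)\binom{N}{N-j}[F(t)]^{N-j-1}$, whose total mass is of order unity but which is increasingly concentrated. The clean way to handle this is a sandwiching argument: for any $\eps>0$ choose $T$ so large that $(1-\eps)\lambda c_i e^{-\lambda t}\le f_i(t)\le(1+\eps)\lambda c_i e^{-\lambda t}$ and similarly $ce^{-\lambda t}(1-\eps)\le 1-F(t)\le ce^{-\lambda t}(1+\eps)$ on $[T,\infty)$, then bound the tail integral above and below by Beta integrals with constants $c_i(1\pm\eps)/c(1\mp\eps)$, let $N\to\infty$ to get $c_i(1\pm\eps)/(c(1\mp\eps))$ as the limiting bounds, and finally let $\eps\to0$. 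This pins the limit to $c_i/c$ and completes the proof; the identity~\eqref{jN} (and its $j$-shifted analogue) guarantees that the probabilistic quantity of interest is exactly the integral being estimated.
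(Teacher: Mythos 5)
Your proposal is correct and follows essentially the same route as the paper's proof: a $(1\pm\eps)$-sandwich of $1-F$ and $f_i$ on the tail $[T,\infty)$, exponential-in-$N$ vanishing of the head $[0,T]$ against the polynomially growing prefactor, the substitution $u=ce^{-\lambda t}$ reducing the tail to a Beta integral evaluated by the Beta--factorial identity (this is exactly the paper's Lemma on the integral calculation, with $u\mapsto 1-u$), exponentially small truncation error, and finally $\eps\to0$. No gaps; nothing further is needed.
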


\section{Proof of Eq.~(3) in main text}\label{pe3}

We now apply Theorem~\ref{detailprobj} to obtain Eq.~(3) in the main text. Suppose the belief of each agent evolves independently according to the following stochastic differential equation (SDE),
\begin{align}\label{simplesde}
\dd X
=\mu\,\dd t+\sqrt{2D}\,\dd W,
\end{align}
where $\mu\in\R$ is a constant drift, $D>0$ is a constant diffusivity, and $W=\{W(t)\}_{t\ge0}$ is a standard Brownian motion. Define the FPT,
\begin{align*}
\tau
:=\inf\{t>0:X(t)\notin(-\theta,\theta)\},
\end{align*}
for some threshold $\theta>0$. Assume that the initial distribution $\P(X(0)=x_{i})$ of each agent is a sum of Dirac masses at a finite set of points $\{x_{0},x_1,\dots,x_{I-1}\}$,
\begin{align*}
\P(X(0)=x)
=\begin{cases}
q_{i} & \text{if }x=x_{i}\text{ for some }i\in\{0,1,\dots,I-1\},\\
0 & \text{if }x\notin\cup_{i=0}^{I-1}x_{i}.
\end{cases},
\end{align*}


Letting $F_{i}(t) \equiv F_{X(0) = x_i}(t) =\P(\tau\le t\cap X(0)=x_{i})$, we have that \cite{linn2022extreme} 
\begin{align}\label{eq:asymp1ddrift}
F_{i}(t)
\sim q_{i}A_{i}t^{1/2}e^{-C_{i}/t}\quad\text{as }t\to0^+,
\end{align}
where
\begin{align*}
C_{i}
=\frac{(L_{i})^{2}}{4D},
\end{align*}
and
\begin{align*}
    A_{i}
=\begin{cases}
    \exp\big(\frac{-\mu L_{i}}{2D}\big)\sqrt{\frac{4D}{\pi (L_{i})^{2}}} & \text{if }x_i<0\\
    \exp\big(\frac{\mu L_{i}}{2D}\big)\sqrt{\frac{4D}{\pi (L_{i})^{2}}} & \text{if }x_i>0\\
    \big[\exp\big(\frac{-\mu L_{i}}{2D}\big)+\exp\big(\frac{\mu L_{i}}{2D}\big)\big]\sqrt{\frac{4D}{\pi (L_{i})^{2}}} & \text{if }x_i=0,
\end{cases}
\end{align*}
where $L_{i}$ is the distance to the closest threshold from $x_{i}$,
\begin{align*}
L_{i}=\min\{\theta-x_{i},x_{i}+\theta\}.
\end{align*}
Further, we assume ${0}\in\{0,1,\dots,I-1\}$ is the index of the unique starting location closest to a threshold
\begin{align*}
L_{{0}}
=\min\{L_0,L_{1},\dots,L_{I-1}\}
<L_i\quad\text{if }i\neq {0},
\end{align*}
then
\begin{align*}
F(t)
\sim F_{{0}}(t)\quad\text{as }t\to0^+.
\end{align*}

We claim that
\begin{align}\label{want}
\P(X_{n(1)}(0)=x_{{0}})\to1\quad\text{as }N\to\infty,
\end{align}
Thus, when $N$ is large the first decider out of many deciders is always the one with the most extreme initial bias. Using the integral representation in Proposition~\ref{genint} and applying Theorem~\ref{detailprobj} yields
\begin{align*}
\P(X_{n(1)}(0)=x_{i})\sim \eta_{i}(1) (\ln N)^{(\beta_{i}-1)/2}N^{1-\beta_{i}}\quad\text{as }N\to\infty\quad\text{for each }i\neq {0},
\end{align*}
where 
\begin{align*}
\beta_{i}
&=(L_{i}/L_{{0}})^{2}>1,
\end{align*}
and
\begin{align*}
    \eta_{i}(1)
&=
\begin{cases}
\displaystyle\frac{q_i}{q_{{0}}^{\beta_i}}\sqrt{\frac{\pi^{\beta_{i}-1}}{\beta_{i}}}\Gamma(\beta_{i}+1)\exp\big(\frac{\sqrt{\beta_{i}}}{2D}\big(\mu_i L_{{0}}-\mu_{{0}}L_i\big)\big) & \text{if }x_i\neq0,\\
\displaystyle \frac{q_i}{q_{{0}}^{\beta_i}}\sqrt{\frac{\pi^{\beta_{i}-1}}{\beta_{i}}}\Gamma(\beta_{i}+1)\Big[\exp\big(\frac{\sqrt{\beta_{i}}}{2D}\big(\mu_i L_{{0}}-\mu_{{0}}L_i\big)\big)+\exp\big(\frac{\sqrt{\beta_{i}}}{2D}\big(-\mu_i L_{{0}}-\mu_{{0}}L_i\big)\big)\Big] & \text{if }x_i=0,
\end{cases}
\end{align*}
where $\mu_i = \pm \mu$ if $x_i \gtrless 0$.

\section{First decision agrees with initial bias}

The analysis above shows that the first agent to decide in a large group has the most extreme initial bias. We now show the intuitive result that this first decider's decision agrees with their initial bias. Without loss of generality, assume that the most extreme initial bias is negative, $x_{{0}}<0$. Letting $F_+(t)=\P(\tau\le t\cap X(\tau)=+\theta)$, we have
\begin{align*}
    F_+(t)
    &=\sum_i \P(\tau\le t\cap X(\tau)=+\theta\,|\,X(0)=x_i)q_i\\
    &\sim\P(\tau\le t\cap X(\tau)=+\theta\,|\,X(0)=x_{i^+})q_{i^+}\\
    &\sim q_{i^+}A_{i^+}t^p_{i^+} e^{-C_{i^+}/t}\quad\text{as }t\to0^+,
\end{align*}
where $i^+\in\{1,\dots,I\}$ is the index of the starting location closest to $+\theta$. Using the integral representation in Proposition~\ref{genint} and applying Theorem~\ref{detailprobj} yields
\begin{align*}
\P(X_{n(1)}(\tau)=+\theta)\sim \eta_{i^+}^{(1)} (\ln N)^{(\beta_{i^+}-1)/2}N^{1-\beta_{i^+}}\quad\text{as }N\to\infty. 
\end{align*}

\section{Continuous initial belief distribution}

In Section~\ref{pe3}, we showed that the first of many deciders have the most extreme initial beliefs in the case that the population has a discrete initial belief distribution. We now generalize this calculation to the case that the deciders have a continuous initial belief distribution. In particular, suppose that the decider's initial belief (position) has a smooth probability density $\nu(x)$ with support $(a,b)$ with $-\theta<a<b<\theta$. Suppose that 
\begin{align*}
    \nu(x)
    &\sim (x-a)^{\alpha_a}\nu_a\quad\text{as }x\to a^+,\\
    \nu(x)
    &\sim (b-x)^{\alpha_b}\nu_b\quad\text{as }x\to b^-,
\end{align*}
where the coefficients are positive, $\nu_a>0$, $\nu_b>0$, and the powers ensure that $\nu$ is integrable, $\alpha_a>-1$, $\alpha_b>-1$. In light of \eqref{eq:asymp1ddrift}, suppose that
\begin{align*}
    \P(\tau\le t\,|\,X(0)=x)\sim A(x)t^pe^{-C(x)/t}\quad\text{as }t\to0^+,\quad\text{uniformly for all }x\in[a,b],
\end{align*}
where
\begin{align*}
    C(x)
    =(L(x))^2/(4D)>0,
    \quad L(x)=\min\{\theta-x,\theta+x\},
\end{align*}
and $A(x)>0$ for all $x\in[a,b]$.

It follows that
\begin{align*}
    F(t)
    =\P(\tau\le t)
    &=\int_a^b \P(\tau\le t\,|\,X(0)=x)\nu(x)\,\dd x\\
    &\sim t^p\int_a^b A(x)\nu(x)e^{-C(x)/t}\,\dd x\quad\text{as }t\to0^+.
\end{align*}
We thus need to estimate the small time $t$ asymptotics of the integral
\begin{align*}
    I
    :=\int_a^b A(x)\nu(x)e^{-C(x)/t}\,\dd x,
\end{align*}
which is an exercise in Laplace's method \cite{bender2013}. If $b>0$, then for any $\eps\in(0,b)$, we have
\begin{align*}
    \int_{0}^b A(x)\nu(x)e^{-C(x)/t}\,\dd x
    &\sim\int_{b-\eps}^b A(x)\nu(x)e^{-C(x)/t}\,\dd x\\
    &\sim A(b)e^{-C(b)/t}\nu_b\Gamma(\alpha_b+1)t^{\alpha_b+1}\quad\text{as }t\to0^+.
\end{align*}
Similarly, if $a<0$, then for any $\eps\in(0,|a|)$, we have
\begin{align*}
    \int_{a}^0 A(x)\nu(x)e^{-C(x)/t}\,\dd x
    &\sim\int_{a}^{a+\eps} A(x)\nu(x)e^{-C(x)/t}\,\dd x\\
    &\sim A(a)e^{-C(a)/t}\nu_a\Gamma(\alpha_a+1)t^{\alpha_a+1}\quad\text{as }t\to0^+.
\end{align*}
Putting this together, we have that if $b>|a|$, then
\begin{align*}
    F(t)
    \sim A(b)\nu_b\Gamma(\alpha_b+1)t^{p+\alpha_b+1}e^{-C(b)/t}\quad\text{as }t\to0^+,
\end{align*}
and similarly if $|a|>b$ or $|a|=b$.

With these estimates, we can apply Theorem~\ref{detailprobj} to obtain estimates that the fastest decider(s) have extreme initial beliefs. In particular, suppose we want to estimate
\begin{align*}
    \P(a+\eps<X_{n(1)}(0)<b-\eps)\quad\text{for some small $0<\eps\ll1$},
\end{align*}
which is the probability that the fastest decider does not have extreme initial beliefs. If we define the event
\begin{align*}
    E=\{a+\eps<X(0)<b-\eps\},
\end{align*}
then using the notation of Section~\ref{prelim}, we have that
\begin{align*}
    F_E(t)
    :=\P(\tau\le t\cap E)
    &=\int_{a+\eps}^{b-\eps}\P(\tau\le t\,|\,X(0)=x)\nu(x)\,\dd x\\
    &\sim t^p\int_{a+\eps}^{b-\eps} A(x)\nu(x)e^{-C(x)/t}\,\dd x\quad\text{as }t\to0^+,
\end{align*}
which can be estimated as above using Laplace's method \cite{bender2013}. In particular, if $b>|a|$, then 
\begin{align*}
    F_E(t)
    \sim A(b-\eps)\nu(b-\eps)t^{p+1}e^{-C(b-\eps)/t}\quad\text{as }t\to0^+,
\end{align*}
assuming $\nu(b-\eps)>0$, and similarly if $|a|>b$ or $|a|=b$. With this short-time behavior of $F_E(t)$, we can then plug this into Theorem~\ref{detailprobj} to show that the first deciders have the most extreme initial beliefs.

\section{Heterogeneous population with multiple alternatives}

We next consider the generalized case where the beliefs of the agents in the population evolve as processes with (possibly space-dependent) drift, diffusion coefficient, initial position, and even domain (in their own arbitrary space dimension $d\ge1$). Suppose the belief of the $i$th decider evolves according to the following $d$-dimensional SDE, 
\begin{align}
\dd X_{i}
=\mu_{i}(X_{i})\,\dd t
+\sqrt{2D_{i}}\,\dd W_{i}, \label{E:multalt}
\end{align}
where $\mu_{i}:\R^{{d}}\to\R^{{d}}$ is a possibly space-dependent drift, $D_{i}>0$ is the diffusion coefficient, and $W(t)\in\R^{d}$ is a standard Brownian motion in $d$-dimensional space. 

 Let $L>0$ denote an agent's (random) shortest distance they must travel to hit the closest target and let $D>0$ denote the agent's diffusion coefficient. Define the random timescale
\begin{align*}
    S=\frac{L^2}{4D}>0.
\end{align*}
Suppose that $S$ has a discrete distribution on a finite set
\begin{align*}
    0<s_0< s_1< s_2< s_3\dots< s_I,
\end{align*}
where
\begin{align*}
    \P(S=s_i)=q_i>0,\quad\sum_{i=0}^I q_i=1.
\end{align*}
Since we have $N\ge1$ iid agents indexed from $n=1$ to $n=N$, we let $S_n$ denote the value of $S$ for the $n$th agent and $S_{n(j)}$ the value of $S$ for the $j$th fastest to decide. 

We have that \cite{Lawley20b}
\begin{align*}
    \lim_{t\to0^+}t\ln \P(\tau\le t)
&= -s_{0}<0,\quad
\lim_{t\to0^+}t\ln \P(\tau\le t\cap S=s_i)
= -s_i<0.
\end{align*}
Hence, Proposition~\ref{genint} and Theorem~\ref{logprob} imply that for any fixed $j\ge1$ and $i\in\{1,\dots,I\}$ and any $\eps>0$,
\begin{align}\label{simpleall}
    N^{1-s_i/s_0-\eps}
    \ll\P(S_{n(j)}=s_i)
    \ll N^{1-s_i/s_0^\eps}\quad\text{as }N\to\infty,
\end{align}
where we use the notation $f\ll g$ to mean $\lim f/g=0$. That is, in more traditional notation,
\begin{align*}
    N^{1-s_i/s_0-\eps}
    =o\big(\P(S_{n(j)}=s_i)\big)\quad\text{as }N\to\infty,\\
    \P(S_{n(j)}=s_i)
    =o\big(N^{1-s_i/s_0+\eps}\big)\quad\text{as }N\to\infty.
\end{align*}
In the special case that the agents all move in one space dimension and the drifts are spatially constant (but may differ between agents), we can get the constant and logarithmic prefactors on the decay of $\P(S_{n(j)}=s_i)$ as $N\to\infty$.

The result in Eq.~\eqref{simpleall} says that in a large population if all the agents have the same diffusion coefficient, then the fastest deciders started closest to their decision thresholds (targets). If we allow the diffusion coefficients to vary between agents, then \eqref{simpleall} implies that the fastest deciders started close to their decision thresholds and/or they had big diffusion coefficients.

\section{Slowest deciders}

Suppose the beliefs of the iid agents diffuse in some $d$-dimensional spatial domain $U\subset\R^d$ and can be absorbed at one of $m\ge2$ targets $V_0,\dots,V_{m-1}$ and let $\kappa\in\{0,\dots,m-1\}$ indicate which target the decider eventually hits.   Here, we will think of the $m$ targets as parts of the  $d-1$ dimensional boundary of the domain, and assume that hitting one of the targets triggers a decision.
Following \cite{pavliotis2014, madrid2020comp}, suppose the beliefs of the deciders evolve as stochastic process $\{X(t)\}_{t\ge0}$ that diffuse according to the SDE
\begin{align}\label{sde}
\dd X(t)
=-\nabla V(X(t))\,\dd t+\sqrt{2D}\,\dd W(t),
\end{align}
with reflecting boundary conditions. In Eq.~\eqref{sde}, the drift term is the gradient of a given potential, $V(x)$, and the noise term depends on the diffusion coefficient $D>0$ and a standard $d$-dimensional Brownian motion (Wiener process) $\{W(t)\}_{t\ge0}$. 
The survival probability conditioned on the initial position,
\begin{align*}
\S(x,t)
:=\P(\tau>t\,|\,X(0)=x),
\end{align*}
satisfies the backward Kolmogorov (also called backward Fokker-Planck) equation,
\begin{align}\label{backward}
\begin{split}
\tfrac{\partial}{\partial t}\S
&=\L \S,\quad x\in{U},\\
\S
&=0,\quad x\in\textup{targets},\\
\tfrac{\partial}{\partial\n}\S
&=0,\quad x\in\textup{reflecting boundary (if there is one)},\\
S
&=1,\quad t=0.
\end{split}
\end{align}
In Eq.~\eqref{backward}, the differential operator $\L$ is the generator (i.e.\ the backward operator) of Eq.~\eqref{sde},
\begin{align*}
\L
=-\nabla V(x)\cdot\nabla+D\Delta,
\end{align*}
and $\frac{\partial}{\partial\n}$ is the derivative with respect to the inward unit normal $\n:\partial{U}\to\R^{d}$.

Using the following weight function of Boltzmann form ,
\begin{align}\label{rho}
\rho(x)
:=\frac{e^{-V(x)/D}}{\int_{{U}}e^{-V(y)/D}\,\dd y},
\end{align}
one can check that the differential operator $\L$ is formally self-adjoint on the weighted space of square integrable functions (see, for example, Ref.~\cite{pavliotis2014}),
\begin{align*}
L_{\rho}^{2}({U})
:=\Big\{f:\int_{{U}}|f(x)|^{2}\rho(x)\,\dd x<\infty\Big\},
\end{align*}
using the boundary conditions in \eqref{backward} and the following weighted inner product,
\begin{align*}
(f,g)_{\rho}
:=(f,g\rho)
=\int_{{U}}f(x)g(x)\rho(x)\,\dd x,
\end{align*} 
where $(f,g)=\int_U f(x)g(x)\,\dd x$ denotes the standard $L^2$-inner product (i.e.\ with no weight function). Expanding the solution to \eqref{backward} yields,
\begin{align}\label{Seig}
\S(x,t)
=\sum_{n\ge1}(u_{n},1)_{\rho}e^{-\lambda_{n}t}u_{n}(x)
=\sum_{n\ge1}(u_{n},\rho)e^{-\lambda_{n}t}u_{n}(x),
\end{align}
where
\begin{align}\label{order}
0<\lambda_{1}<\lambda_{2}\le\dots,
\end{align}
denote the (necessarily positive) eigenvalues of $-\L$. The corresponding with eigenfunctions $\{u_{n}(x)\}_{n\ge1}$ satisfy the following time-independent equation,
\begin{align}\label{evalproblem}
\begin{split}
-\L u_{n}
&=\lambda_{n}u_{n},\quad x\in{U},
\end{split}
\end{align}
and identical boundary conditions as $\S$. Further, the eigenfunctions are orthogonal and are taken to be orthonormal, which means that
\begin{align}\label{orthonormal}
(u_{n},u_{m})_{\rho}
=\delta_{nm}\in\{0,1\},
\end{align}
where $\delta_{nm}$ denotes the Kronecker delta function (i.e.\ $\delta_{nn}=1$ and $\delta_{mn}=0$ if $n\neq m$). 

If the initial distribution of an agent has probability measure $\mu_{0}$,
\begin{align}\label{initial}
\P(X(0)\in B)=\mu_{0}(B)=\int_{B}1\,\dd \mu_{0}(x),\quad B\subset{U},
\end{align}
then the FPT $\tau$ has survival probability given by
\begin{align*}
S(t)
&:=\P(\tau>t\,|\,X(0)=_{\dist}\mu_{0})
=\int_{{U}}\S(x,t)\,\dd\mu_{0}(x),
\end{align*}
where the condition $X(0)=_{\dist}\mu_{0}$ in the conditional probability merely denotes that $X(0)$ has initial distribution given by $\mu_0$. 
Hence, we obtain the following representation for the survival probability,
\begin{align}\label{form}
S(t)
=\sum_{n\ge1}A_{n}e^{-\lambda_{n}t}
=\sum_{n\ge1}(u_n,\rho)(u_n,\dd\mu_0)e^{-\lambda_{n}t},
\end{align}
where the coefficients are given by the following integrals,
\begin{align}\label{As}
A_{n}
:=(u_{n},1)_{\rho}\int_{{U}}u_{n}(x)\,\dd\mu_{0}(x),
\quad n\ge1.
\end{align}

We have that the FPT $\tau$ to one of the targets has CDF
\begin{align*}
    F(t)
    =\P(\tau\le t)
    &=1-\P(\tau> t)\\
    &=1-\sum_{k\ge1} (u_k,\rho)(u_k,\dd\mu_0)e^{-\lambda_k t},
\end{align*}
If
\begin{align*}
    p_i(x)
    =\P(\kappa=i|X(0)=x),
\end{align*}
then
\begin{align*}
    F_i(t)
    :=\P(\tau\le t\cap\kappa=i)
    &=\P(\kappa=i)-\P(\tau>t\cap\kappa=i)\\
    &=\P(\kappa=i)-\sum_{k\ge1} (u_k,p_i\rho)(u_k,\dd\mu_0)e^{-\lambda_k t},
\end{align*}
and therefore
\begin{align*}
    f_i(t)
    :=F_i'(t)
    =\sum_{k\ge1} \lambda_k(u_k,p_i\rho)(u_k,\dd\mu_0)e^{-\lambda_k t}
\end{align*}
Applying Proposition~\ref{genint} and Theorem~\ref{estslow} yields
 \begin{align*}
    \P(\kappa_{n(N-j)}=i)
    \to\frac{(u_1,p_i\rho)}{(u_1,\rho)}
    =\frac{(u_1\rho,p_i)}{(u_1,\rho)}\quad\text{as }N\to\infty.
\end{align*}

Now, the solution to the forward Fokker-Planck equation is given by
\begin{align*}
    p(x,t)
    =\P(X(t)=\dd x\,|\,\tau>t)
    =\sum_{k\ge1}e^{-\lambda_k t}(u_k,\dd\mu_0)\rho(x)u_k(x).
\end{align*}
Hence, $u_1(x)\rho(x)/(u_1,\rho)$ is the quasi-stationary distribution (QSD), $q(x),$ defined by
\begin{align*}
    q(x)
    =\lim_{t\to\infty}\P(X(t)=\dd x\,|\,\tau>t)
    &=\lim_{t\to\infty}\frac{\P(X(t)=\dd x\cap\tau>t)}{\P(\tau>t)}\\
    &=\lim_{t\to\infty}\frac{\sum_{k\ge1}e^{-\lambda_k t}(u_k,\dd\mu_0)\rho(x)u_k(x)}{\sum_{k\ge1}(u_k,\rho)(u_k,\dd\mu_0)e^{-\lambda_k t}}\\
    &=\lim_{t\to\infty}\frac{e^{-\lambda_1 t}(u_1,\dd\mu_0)\rho(x)u_1(x)}{(u_1,\rho)(u_1,\dd\mu_0)e^{-\lambda_1 t}}\\
    &=\frac{\rho(x)u_1(x)}{(u_1,\rho)}.
\end{align*}

Summarizing, we have shown that
\begin{align}\label{slowint}
    \P(\kappa_{n(N-j)}=i)
    \to\int_U p_i(x)q(x)\,\dd x\quad\text{as }N\to\infty.
\end{align}

\paragraph{The case of drift-diffusion processes in one dimension.} For the one-dimensional example in which all the beliefs of all the agents evolve according to \eqref{simplesde}, we can compute the QSD, and find that 
\begin{align*}
    q(x)
    =\frac{\left(\pi ^2 D^2+\theta ^2 \mu ^2\right) \cos (\frac{\pi  x}{2 \theta }) e^{\frac{\mu  (\theta +x)}{2 D}}}{2 \pi  D^2 \theta  (e^{\frac{\theta  \mu }{D}}+1)}.
\end{align*}
Further, it is straightforward to show that the probability that a decider reaches $+\theta$ before $-\theta$ conditioned on the initial belief $x\in[-\theta,\theta]$ is
\begin{align*}
    p_1(x)
    :=\P(X(\tau)=+\theta)
    =\frac{1}{2} \Big(\coth \Big(\frac{\theta  \mu }{D}\Big)-1\Big) e^{\frac{\mu  (\theta -x)}{D}} \left(e^{\frac{\mu  (\theta +x)}{D}}-1\right)
\end{align*}
Therefore, applying \eqref{slowint} and explicitly computing the integral yields
 \begin{align*}
    \P(\kappa_{n(N-j)}=1)
    \to\int_{-\theta}^{\theta}p_1(x)q(x)\,\dd x
    =\frac{1}{1+e^{-\frac{\theta  \mu }{D}}}
    =p_1(0)\quad\text{as }N\to\infty.
\end{align*}
Hence, the slowest deciders out of $N\gg1$ deciders make a decision as if they were initially unbiased (i.e.\ as if $X(0)=0$).
\section{Proofs}\label{proofs}

\begin{proof}[Proof of Proposition~\ref{genint}]
Since $\{(\tau_{n},Z_{n})\}_{n\ge1}$ are identically distributed, we have that
\begin{align}\label{start0}
\begin{split}
\P(A_{n(j)})
&=\sum_{\textup{distinct indices}\atop n_{1},\dots,n_{N}\in\{1,\dots,N\}}\P(\max\{\tau_{n_{1}},\dots,\tau_{n_{j-1}}\}<\tau_{n_{j}}<\min\{\tau_{n_{j+1}},\dots,\tau_{n_{N}}\}\cap A_{n_{j}})\\
&=j{N\choose j}\P(\max\{\tau_{1},\dots,\tau_{j-1}\}<\tau_{j}<\min\{\tau_{j+1},\dots,\tau_{N}\}\cap A_{j}),
\end{split}
\end{align}
where the coefficient comes from noting that the number of terms in the sum is obtained by choosing the $j$ fastest FPTs out of $N$ and then choosing which of those $j$ will be the $j$th fastest. Define
\begin{align*}
\tau_{j}^{(A_{j})}
=\begin{cases}
\tau_{j} & \text{if $A_{j}$ occurs},\\
+\infty & \text{if $A_{j}$ does not occur},
\end{cases}
\end{align*}
so that if $j<N$,
\begin{align*}
&\P(\max\{\tau_{1},\dots,\tau_{j-1}\}<\tau_{j}<\min\{\tau_{j+1},\dots,\tau_{N}\}\cap A_{j})\\
&\quad=\P(\max\{\tau_{1},\dots,\tau_{j-1}\}<\tau_{j}^{(A_{j})}<\min\{\tau_{j+1},\dots,\tau_{N}\})
\end{align*}
To handle the case $j=N$, we can simply replace $+\infty$ by $-\infty$ in the definition of $\tau_{j}^{(A_{j})}$.

Since $\{\tau_{n}\}_{n\ge1}$ are iid, we have that
\begin{align*}
\P(\max\{\tau_{1},\dots,\tau_{j-1}\}<t)
=\P(\max\{\tau_{1},\dots,\tau_{j-1}\}\le t)
=[F(t)]^{j-1},
\end{align*}
where we have used that $F(t)$ is continuous. Similarly, 
\begin{align*}
\P(\min\{\tau_{j+1},\dots,\tau_{N}\}>t)
=[1-F(t)]^{N-j},
\end{align*}
Using that $\{\tau_{n}\}_{n\ge1}$ are independent, we have
\begin{align*}
G(t)
:&=\P(\max\{\tau_{1},\dots,\tau_{j-1}\}<t<\min\{\tau_{j+1},\dots,\tau_{N}\})\\
&=\P(\max\{\tau_{1},\dots,\tau_{j-1}\}<t)\P(t<\min\{\tau_{j+1},\dots,\tau_{N}\})\\
&=[F(t)]^{j-1}[1-F(t)]^{N-j}.
\end{align*}
Combining the above finally yields
\begin{align*}
\P(A_{n(j)})
&=j{N\choose j}\P(\max\{\tau_{1},\dots,\tau_{j-1}\}<\tau_{j}^{(A_{j})}<\min\{\tau_{j+1},\dots,\tau_{N}\})\\
&=j{N\choose j}\E[G(\tau_{j}^{(A_{j})})]\\
&=j{N\choose j}\int_{0}^{\infty}[F(t)]^{j-1}[1-F(t)]^{N-j}\,\dd F_{E}(t),
\end{align*}
which completes the proof.
\end{proof}

The proof of Theorem~\ref{detailprobj} is similar to the proof of Theorem~3 in \cite{linn2022extreme}.

\begin{proof}[Proof of Theorem~\ref{detailprobj}]
Define the integral from $t=a$ to $t=b$,
\begin{align*}
I_{a,b}
:=\int_{a}^{b}[F(t)]^{j-1}[1-F(t)]^{N-j}\,\dd F_{+}(t).
\end{align*}
Let $\eps\in(0,1)$. By the assumptions in Eq.~\eqref{s1}-\eqref{s2}, there exists a $\delta>0$ so that
\begin{align}
A_{-\eps}t^{p}e^{-C_{0}/t}
&\le F(t)
\le A_{+\eps}t^{p}e^{-C_{0}/t} \quad \text{for all }t\in(0,\delta),\label{lb199}\\
B_{-\eps}t^{q}e^{-C_{+}/t}
&\le F_{+}(t)
\le B_{+\eps}t^{q}e^{-C_{+}/t} \quad \text{for all }t\in(0,\delta),\label{lb299}
\end{align}
where $A_{\pm\eps}:=A(1\pm\eps)$ and $B_{\pm\eps}:=B(1\pm\eps)$. 
Using Eq.~\eqref{lb199} and integrating by parts yields
\begin{align}\label{ibp0}
\begin{split}
&I_{0,\delta}
\le \int_{0}^{\delta}(A_{+\eps}t^{p}e^{-C_{0}/t})^{j-1}(1-A_{-\eps}t^{p}e^{-C_{0}/t})^{N-j}\,\dd F_{+}(t)\\
&= (A_{+\eps}\delta^{p}e^{-C_{0}/\delta})^{j-1}(1-A_{-\eps}\delta^{p}e^{-C_{0}/\delta})^{N-j}F_{+}(\delta)\\
&
+ (N-j)\int_{0}^{\delta}(A_{+\eps}t^{p}e^{-C_{0}/t})^{j-1}(pt^{-1}+C_{0}t^{-2})A_{-\eps}t^{p}e^{-C_{0}/t}\big(1-A_{-\eps}t^{p}e^{-C_{0}/t}\big)^{N-j-1}F_{+}(t)\,\dd t\\
&
- (j-1)\int_{0}^{\delta}(A_{+\eps}t^{p}e^{-C_{0}/t})^{j-1}(pt^{-1}+C_{0}t^{-2})\big(1-A_{-\eps}t^{p}e^{-C_{0}/t}\big)^{N-j}F_{+}(t)\,\dd t\\
&\le (A_{+\eps}\delta^{p}e^{-C_{0}/\delta})^{j-1}(1-A_{-\eps}\delta^{p}e^{-C_{0}/\delta})^{N-j}F_{+}(\delta)\\
&
+ (N-j)\int_{0}^{\delta}(A_{+\eps}t^{p}e^{-C_{0}/t})^{j}(pt^{-1}+C_{0}t^{-2})\big(1-A_{-\eps}t^{p}e^{-C_{0}/t}\big)^{N-j-1}B_{+\eps}t^{q}e^{-C_{+}/t}\,\dd t\\
&
- (j-1)\int_{0}^{\delta}(A_{+\eps}t^{p}e^{-C_{0}/t})^{j-1}(pt^{-1}+C_{0}t^{-2})\big(1-A_{-\eps}t^{p}e^{-C_{0}/t}\big)^{N-j}B_{-\eps}t^{q}e^{-C_{+}/t}\,\dd t,
\end{split}
\end{align}
where we have used Eq.~\eqref{lb299} in the final inequality. 
The first term in the righthand side of Eq.~\eqref{ibp0} vanishes exponentially fast as $N\to\infty$. Using Proposition~\ref{p1} to find the large $N$ behavior of the second two terms in the righthand side of Eq.~\eqref{ibp0} and the fact that $I_{\delta,\infty}$ vanishes exponentially fast as $N\to\infty$ yields
\begin{align*}
\limsup_{N\to\infty}\frac{j{N\choose j}I_{0,\infty}}{{{\eta_j}}(\ln N)^{p{\beta}-q}N^{1-{\beta}}}
\le \frac{(1+\eps)}{(1-\eps)^{\beta}}.
\end{align*}
The analogous argument yields the lower bound
\begin{align*}
\liminf_{N\to\infty}\frac{j{N\choose j}I_{0,\infty}}{{{\eta_j}}(\ln N)^{p{\beta}-q}N^{1-{\beta}}}
\ge \frac{(1-\eps)}{(1+\eps)^{\beta}}.
\end{align*}
Since $\eps\in(0,1)$ is arbitrary, the proof is complete.
\end{proof}



\begin{proof}[Proof of Theorem~\ref{logprob}]
Define the integral from $t=a$ to $t=b$,
\begin{align*}
I_{a,b}
:=\int_{a}^{b}[F(t)]^{j-1}[1-F(t)]^{N-j}\,\dd F_{+}(t).
\end{align*}
 By Eq.~\eqref{lb}, there exists a $\delta>0$ so that
\begin{align}
e^{-(C_{0}+\eps)/t}\le
F(t)
&\le e^{-(C_{0}-\eps)/t}\quad\text{for all }t\in(0,\delta),\label{lb1j}\\
F_+(t)
&\le e^{-(C_+-\eps)/t}\quad\text{for all }t\in(0,\delta).\label{lb2j}
\end{align}
Using Eq.~\eqref{lb1j} and integrating by parts yields
\begin{align}\label{ibp055j}
\begin{split}
I_{0,\delta}
&\le \int_{0}^{\delta}e^{-(j-1)(C_{0}-\eps)/t}\big(1-e^{-(C_{0}+\eps)/t}\big)^{N-j}\,\dd F_+(t)\\
&= F_+(\delta)e^{-(j-1)(C_{0}-\eps)/\delta}\big(1-e^{-(C_{0}+\eps)/\delta}\big)^{N-j}\\
&\quad
+ (N-j)(C_{0}+\eps)\int_{0}^{\delta}e^{-(j-1)(C_{0}-\eps)/t}t^{-2}e^{-(C_{0}+\eps)/t}\big(1-e^{-(C_{0}+\eps)/t}\big)^{N-j-1}F_+(t)\,\dd t\\
&\quad
- \int_{0}^{\delta}(j-1)(C_{0}-\eps)t^{-2}e^{-(j-1)(C_{0}-\eps)/t}\big(1-e^{-(C_{0}+\eps)/t}\big)^{N-j}F_+(t)\,\dd t.
\end{split}
\end{align}
The first term in the righthand side of Eq.~\eqref{ibp055j} vanishes exponentially fast as $N\to\infty$. To handle the second term in the righthand side of Eq.~\eqref{ibp055j}, note that Eq.~\eqref{lb2j} implies that
\begin{align}\label{to9j}
\begin{split}
&\int_{0}^{\delta}e^{-(j-1)(C_{0}-\eps)/t}t^{-2}e^{-(C_{0}+\eps)/t}\big(1-e^{-(C_{0}+\eps)/t}\big)^{N-j-1}F_+(t)\,\dd t\\
&\quad\le\int_{0}^{\delta}e^{-(j-1)(C_{0}-\eps)/t}t^{-2}e^{-(C_{0}+\eps)/t}\big(1-e^{-(C_{0}+\eps)/t}\big)^{N-j-1}e^{-(C_+-\eps)/t}\,\dd t.
\end{split}
\end{align}
Since the third term in the righthand side of Eq.~\eqref{ibp055j} is nonpositive, applying Proposition~\ref{p1} to Eq.~\eqref{to9j} and using Eq.~\eqref{ibp055j} and the fact that $I_{\delta,\infty}$ vanishes exponentially fast as $N\to\infty$ completes the proof of Eq.~\eqref{ub}.

 If Eq.~\eqref{further} holds, then there exists a $\delta>0$ so that
\begin{align}\begin{split}\label{bounds98}
e^{-(C_{0}+\eps)/t}\le
F(t)
&\le e^{-(C_{0}-\eps)/t}\quad\text{for all }t\in(0,\delta),\\
e^{-(C_++\eps)/t}
\le F_+(t)
&\le e^{-(C_+-\eps)/t}\quad\text{for all }t\in(0,\delta).
\end{split}
\end{align}
Using Eq.~\eqref{bounds98} and integrating by parts yields
\begin{align}
\begin{split}\label{scareful}
    I_{0,\delta}
&\ge \int_{0}^{\delta}e^{-(j-1)(C_{0}+\eps)/t}\big(1-e^{-(C_{0}-\eps)/t}\big)^{N-j}\,\dd F_+(t)\\
&= F_+(\delta)e^{-(j-1)(C_{0}+\eps)/\delta}\big(1-e^{-(C_{0}-
\eps)/\delta}\big)^{N-j}\\
&\quad
+ (N-j)(C_{0}-\eps)\int_{0}^{\delta}e^{-(j-1)(C_{0}+\eps)/t}t^{-2}e^{-(C_{0}-\eps)/t}\big(1-e^{-(C_{0}-\eps)/t}\big)^{N-j-1}F_+(t)\,\dd t\\
&\quad
- \int_{0}^{\delta}(j-1)(C_{0}+\eps)t^{-2}e^{-(j-1)(C_{0}+\eps)/t}\big(1-e^{-(C_{0}-\eps)/t}\big)^{N-j}F_+(t)\,\dd t\\
&\ge F_+(\delta)e^{-(j-1)(C_{0}+\eps)/\delta}\big(1-e^{-(C_{0}-\eps)/\delta}\big)^{N-j}\\
&\quad
+ (N-j)(C_{0}-\eps)\int_{0}^{\delta}e^{-(j-1)(C_{0}+\eps)/t}t^{-2}e^{-(C_{0}-\eps)/t}\big(1-e^{-(C_{0}-\eps)/t}\big)^{N-j-1}e^{-(C_++\eps)/t}\,\dd t\\
&\quad
- \int_{0}^{\delta}(j-1)(C_{0}+\eps)t^{-2}e^{-(j-1)(C_{0}+\eps)/t}\big(1-e^{-(C_{0}-\eps)/t}\big)^{N-j}e^{-(C_+-\eps)/t}\,\dd t.
\end{split}
\end{align}
The first term in the righthand side of Eq.~\eqref{scareful} vanishes exponentially as $N\to\infty$. Using Proposition~\ref{p1} to estimate the second two terms in the righthand side of Eq.~\eqref{scareful} completes the proof.
\end{proof}

\begin{lemma}\label{intcalc}
For fixed $j\in\{0,1,\dots\}$, $c>0$, $\lambda>0$, and $\delta>0$, we have that
    \begin{align*}
        (N-j){N\choose N-j}\int_{1/\delta}^\infty\big[1-ce^{-\lambda t}]^{N-j-1}e^{-(j+1)\lambda t}\,\dd t
        \to\frac{1}{\lambda c^{j+1}}\quad\text{as }N\to\infty.
    \end{align*}
\end{lemma}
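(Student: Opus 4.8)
The plan is to reduce the stated limit to a Beta-function identity by a single change of variables, and then observe that the combinatorial prefactor cancels the complete Beta integral exactly, leaving only an exponentially small tail to control.

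First I would substitute $u=ce^{-\lambda t}$, so that $\dd t=-\dd u/(\lambda u)$, $e^{-(j+1)\lambda t}=(u/c)^{j+1}$, and $1-ce^{-\lambda t}=1-u$. As $t$ runs from $1/\delta$ to $\infty$, $u$ decreases from $a:=ce^{-\lambda/\delta}$ to $0$, and the integral becomes
\begin{align*}
\int_{1/\delta}^\infty\big(1-ce^{-\lambda t}\big)^{N-j-1}e^{-(j+1)\lambda t}\,\dd t
=\frac{1}{\lambda c^{j+1}}\int_0^a (1-u)^{N-j-1}u^j\,\dd u.
\end{align*}
Throughout I would work in the relevant regime $a<1$; this holds automatically when $c\le1$, and for $\delta$ small when $c>1$, which is exactly the situation in the application to Theorem~\ref{estslow}, where $\delta$ is chosen small so that $1/\delta$ is large.

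The key observation is that the prefactor is precisely the reciprocal of the full Beta integral. Using ${N\choose N-j}={N\choose j}$,
\begin{align*}
(N-j){N\choose N-j}
=\frac{N!}{j!\,(N-j-1)!},
\qquad
\int_0^1 (1-u)^{N-j-1}u^j\,\dd u
=\frac{j!\,(N-j-1)!}{N!},
\end{align*}
the latter being $B(j+1,N-j)=\Gamma(j+1)\Gamma(N-j)/\Gamma(N+1)$. Hence the prefactor times the integral over all of $[0,1]$ equals exactly $1$, so that
\begin{align*}
(N-j){N\choose N-j}\int_0^a (1-u)^{N-j-1}u^j\,\dd u
=1-(N-j){N\choose N-j}\int_a^1 (1-u)^{N-j-1}u^j\,\dd u.
\end{align*}

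It then remains to show the tail term vanishes. On $[a,1]$ the factor $(1-u)^{N-j-1}$ is decreasing, hence at most $(1-a)^{N-j-1}$, while $u^j\le1$; thus the tail integral is bounded by $(1-a)^{N-j-1}$. The prefactor grows only polynomially, $(N-j){N\choose N-j}\sim N^{j+1}/j!$, whereas $(1-a)^{N-j-1}$ decays exponentially since $0<1-a<1$, so their product tends to $0$ and the bracket tends to $1$. Multiplying by $1/(\lambda c^{j+1})$ yields the claimed limit. The only point needing care is the assumption $a<1$: it both keeps the integrand in $[0,1]$ on $[0,a]$ (so the Beta identity applies cleanly) and makes the tail exponentially small, and it is precisely what the small-$\delta$ choice in the application supplies.
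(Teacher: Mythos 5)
Your proof is correct and is essentially identical to the paper's: the paper substitutes $u=1-ce^{-\lambda t}$ (the mirror image of your $u=ce^{-\lambda t}$), invokes the same Beta-integral identity $\int_0^1 u^{a-1}(1-u)^{b-1}\,\dd u=\Gamma(a)\Gamma(b)/\Gamma(a+b)$ to cancel the combinatorial prefactor exactly, and dismisses the remaining piece as exponentially small. If anything, you are slightly more careful than the paper, since you make explicit both the polynomial-versus-exponential comparison for the tail and the requirement $ce^{-\lambda/\delta}<1$ (which the paper's statement, taken literally for arbitrary fixed $c>0$ and $\delta>0$, silently needs as well).
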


\begin{proof}[Proof of Lemma~\ref{intcalc}]
        Changing variables
    \begin{align*}
        u=1-ce^{-\lambda t},\quad \dd u=\lambda c e^{-\lambda t}\,\dd t
    \end{align*}
    yields
    \begin{align}
        \int_{1/\eps}^\infty\big[1-ce^{-\lambda t}]^{N-j-1}e^{-(j+1)\lambda t}\,\dd t
        &=\frac{1}{\lambda c^{j+1}}\int_{1-ce^{-\lambda t}}^1 u^{N-j-1}(1-u)^j\,\dd u\nonumber\\
        &=\frac{1}{\lambda c^{j+1}}\bigg[\frac{(N-j-1)! j!}{N!}-\int_0^{1-ce^{-\lambda t}} u^{N-j-1}(1-u)^j\,\dd u\bigg],\label{clear}
    \end{align}
    where we have used that $\int_0^1 u^{a-1}(1-u)^{b-1}\,\dd u=\Gamma(a)\Gamma(b)/\Gamma(a+b)$. Since the integral in Eq.~\eqref{clear} vanishes exponentially fast, the proof is complete.
\end{proof}


\begin{proof}[Proof of Theorem~\ref{estslow}]
Let $\eps\in(0,1)$. By assumption, there exists $\delta>0$ so that
\begin{align*}
    1-(1+\eps)ce^{-\lambda t}
    \le F(t)
    \le 1-(1-\eps)ce^{-\lambda t}\quad\text{for all }t\ge1/\delta,\\
    \lambda(1-\eps)c_ie^{-\lambda t}
    \le f_i(t)
    \le \lambda(1+\eps)c_ie^{-\lambda t}\quad\text{for all }t\ge1/\delta.
\end{align*}
Defining the integral from $t=a$ to $t=b$,
\begin{align*}
I_{a,b}
:=\int_{a}^{b}[F(t)]^{N-j-1}[1-F(t)]^j f_i(t)\,\dd t,
\end{align*}
we therefore have that
\begin{align*}
    &(1-\eps)^{j+1}\lambda c_ic^j\int_{1/\delta}^\infty\big[1-(1+\eps)ce^{-\lambda t}]^{N-j-1}e^{-(j+1)\lambda t}\,\dd t
    \le I_{1/\delta,\infty}\\
    &\quad\le(1+\eps)^{j+1}\lambda c_ic^j\int_{1/\delta}^\infty\big[1-(1-\eps)ce^{-\lambda t}]^{N-j-1}e^{-(j+1)\lambda t}\,\dd t.
\end{align*}
Since $I_{0,1/\delta}$ vanishes exponentially fast as $N\to\infty$, Lemma~\ref{intcalc} implies that
\begin{align*}
    \Big(\frac{1-\eps}{1+\eps}\Big)^{j+1}\frac{c_i}{c}
    &\le\liminf_{N\to\infty}(N-j){N\choose N-j}I_{0,\infty}\\
    &\le\limsup_{N\to\infty}(N-j){N\choose N-j}I_{0,\infty}
    \le\Big(\frac{1+\eps}{1-\eps}\Big)^{j+1}\frac{c_i}{c}.
\end{align*}
Since $\eps\in(0,1)$ is arbitrary, the proof is complete.
\end{proof}

\section{Numerical solutions} \label{numsol}

Numerical solutions were computed via trapezoidal quadrature on Eq.~\eqref{jj} in Proposition~\ref{genint}. In each set of dynamics, we rescaled the drift-diffusion process on $[-\theta,\theta]$ to the interval $[0,\ell]$. The probability density function for hitting the left boundary in this system is \cite{navarro2009}
\begin{align} \label{f0}
    f_0(t) := \frac{\textup{d}}{\textup{d}t} F_0(t) = \textup{exp}\Big(-\frac{\mu x_0}{2D} - \frac{\mu^2t}{4D}\Big) \frac{D}{\ell^2} \phi\Big(\frac{Dt}{\ell^2},\frac{x_0}{\ell}\Big),
\end{align}
where
\begin{align} \label{smallphi}
    \phi(s,w) := \begin{cases}
        \sum_{k=1}^{\infty} \textup{exp}(-k^2\pi^2s) 2k\pi \sin(k\pi w), \\
        (4\pi s^3)^{-1/2} \sum_{k=-\infty}^{\infty} (w+2k) \textup{exp}\Big( -\frac{(w+2k)^2}{4s} \Big).
    \end{cases}
\end{align}
The expressions in Eq.~\eqref{smallphi} are equivalent but have distinct utility: the top expansion converges quickly for large $s$ while the bottom expansion converges quickly for small $s$. Hence, we utilize both expressions to more accurately compute probabilities associated with slow and fast deciders, respectively.

Integrating Eq.~\eqref{f0} yields
\begin{align*}
    F_0(t) = \int_0^t f_0(t')\,\textup{d}t' = \textup{exp}\Big(-\frac{\mu x_0}{2D}\Big) \Phi\Big(\frac{Dt}{\ell^2},\frac{x_0}{\ell} \Big)
\end{align*}
with long- and short-time expansions of $\Phi(s,w)$ given by
\begin{align*}
    \Phi(s,w) &= \int_0^s \phi(s',w)\,\textup{d}s'\\
    &= \begin{cases}
        \sum_{k=1}^{\infty} \Big( 1 - \textup{exp}[-(b + k^2\pi^2)s] \Big) \frac{2k\pi}{b + k^2\pi^2} \sin(k\pi w), \\
        \sum_{k=-\infty}^{\infty} \frac{\textup{sgn}(2k+w)}{2} \Big( e^{-\sqrt{\frac{b}{D}}|2k+w|}\textup{erfc}\big( \frac{|2k+w|}{\sqrt{4s}} - \sqrt{bs} \big) + e^{\sqrt{\frac{b}{D}}|2k+w|}\textup{erfc}\big( \frac{|2k+w|}{\sqrt{4s}} + \sqrt{bs} \big)\Big)
    \end{cases}
\end{align*}
where $b = (\mu\ell/2D)^2$. By symmetry one can determine the corresponding probability density and cumulative distribution functions for hitting the right boundary. Altogether, we acquire long- and short-time expressions for the cumulative distribution function of an agent making a decision,
\begin{align*}
    F(t) := F_0(t) + F_1(t).
\end{align*}

Where numerical solutions are illustrated, we use the short-time expressions of $\phi$ and $\Phi$ for $10^{-10}\leq t\leq 1$ and the complementary long-time expressions for $1<t\leq 100$, discretizing each time interval into $10^3$ log-spaced points. We consider $10^3$ terms in each series expansion. Moreover, we take $\ell=1$ and unless otherwise stated $D=1$. Finally, where more than one but finitely many initial beliefs are considered, we scale the probability functions according to the corresponding initial distribution as outlined in Section~\ref{pe3}.

Specific details of figures with numerical solutions are as follows: In Fig \textcolor{blue}{1}B we illustrate in color Eq.~\eqref{j1} where $F_E = F$ as defined above with $X_{n(1)}(0) = y$. The black curve, which contains the remaining mass of the total probability, is computed as the sum of the colored curves subtracted from one. In Fig \textcolor{blue}{2}A-C we illustrate the probability that the first decider chooses the decision at $X(T_1)=\theta$ conditioned on having a particular initial bias. Hence, by definition of conditional probability, the numerical solutions are produced from quadrature on ratios of Eq.~\eqref{j1} with $F_E = F_1$ in the numerator and $F_E = F$ in the denominator with $X_{n(1)} = y$. The inset of Fig \textcolor{blue}{2}C is one minus the outset. In Fig \textcolor{blue}{2}D we illustrate Eq.~\eqref{jj} where $F_E=F$ and $S_{n(j)} = s$. In Fig \textcolor{blue}{3}B we illustrate the probability that the last decider chooses the decision at $X(T_N) = \theta$ conditioned on having a particular initial bias. Similar to Fig \textcolor{blue}{2}B, the numerical solutions are produced from quadrature on ratios of Eq.~\eqref{jN} with $F_E = F_1$ in the numerator and $F_E = F$ in the denominator with $X_{n(N)}(0) = y$.

\section{Agent-Based stochastic simulations}
{\em a. One-dimensional drift diffusion equation.} To test the analytical solutions, we solved Eq. (1) in the main text using the Euler-Maruyama method, which describes the evidence accumulation process preceding binary decisions. In this approximation scheme, the true solution to the stochastic differential equation is approximated by a Markov chain $Y$ constructed by setting $Y_0 = X(0)$ and updating $Y$ according to the iterative scheme
$$
Y_{n+1} = Y_n + \mu \Delta t + \sqrt{2D}\Delta W
$$
where $Y_n \equiv Y(n\Delta t)$ is the value of the Markov chain after the $n$th update, and the random variables $\Delta W$ are independent and identically distributed Gaussian random variables with mean 0 and variance $\Delta t$. The equations were integrated until the value of $Y_n$ exceeded $\pm \theta$. 

The temporal discretization, $\Delta t,$ is user-defined. As $N$ grows, the time to first decision decays slowly. Thus, for large $N$, $\Delta t$ must be taken to be sufficiently small for accurate representation of decision dynamics. For simulations here, we chose $\Delta t = 10^{-3}$ for $1 \leq N \leq 1000$. For $N > 1000$, we chose $\Delta t = N^{-1}$.

{\em b. Two-dimensional drift diffusion equation.} Decisions between three choices require a drift-diffusion model evolving on a planar domain~\cite{mcmillen2006dynamics}. Updating the discrete-time approximation of Eq.~(\ref{E:multalt}) for each observer (dropping the $i$ subscript) using Euler-Maruyama provides the following iterative scheme
\begin{align*}
    Y_{n+1}^1 &= Y_n^1 + \mu^1 \Delta t + \sqrt{2D} \Delta W^1, \\
    Y_{n+1}^2 &= Y_n^2 + \mu^2 \Delta t + \sqrt{2D} \Delta W^2,
\end{align*}
where $Y_n^j = Y^j(n \Delta t)$ is the value of the belief after the $n$th update, the random variables $\Delta W^j$ are Gaussian random variables with mean 0 and variance $\sigma^2$. Equations are integrated until the vector $\left( \begin{array}{c} Y_n^1 \\ Y_n^2 \end{array} \right)$ departs the triangular domain
\begin{align*}
    \{ (Y^1,Y^2) | Y^2<h \ \& \ Y^2>-2h(3Y^1+1) \ \& \ Y^2>2h(3Y^1-1) \}
\end{align*}
where $h = 1/2/\sqrt{3}$. Choices of each agent are determined by whether the agent crosses the $Y^2=h$ or $Y^2=-2h(3Y^1+1)$ or $Y^2=2h(3Y^1-1)$ boundary. For simulations again we use $\Delta t = 10^{-3}$ for $1 \leq N \leq 1000$ and $\Delta t = N^{-1}$ for $N > 1000$.

For the 2D case in an equilateral triangle, the threshold $\theta$ is taken to be equal to the length of the apothem---defined as a line from the center of a regular polygon at right angles to any of its sides. Hence, an unbiased agent begins at the centre of the equilateral triangle. We prescribe initial data for biased agents to be anywhere along an apothem except the centre of the triangle.

\bibliographystyle{plain}
\bibliography{refs}
\end{document}